\documentclass[
    a4paper,
    12pt,
    DIV=11,
    headinclude=false,
    footinclude=false,
]{scrartcl}

\usepackage{libertinus}
\usepackage[T1]{fontenc}
\usepackage[utf8]{inputenc}
\usepackage[main=english]{babel}
\usepackage{csquotes}
\usepackage{microtype}

\usepackage{mathtools}
\mathtoolsset{centercolon,mathic}
\allowdisplaybreaks[3] 
\usepackage{amsthm}
\usepackage[libertinus,slantedGreek,varbb]{newtxmath}
\let\emptyset\varnothing
\usepackage[scr=boondoxo,cal=boondoxo]{mathalpha}
\usepackage{derivative}
\derivset{\odv}[sep-frac-fun=\,]

\let\Delta\varDelta

\let\Phi\varPhi

\let\Psi\varPsi

\let\Lambda\varLambda

\let\Gamma\varGamma

\let\Omega\varOmega

\newcommand{\mathup}[1]{\mathrm{#1}}
\newcommand{\quadtext}[1]{\quad\text{#1}\quad}

\newcommand{\Alignindent}{\hspace*{2em}&\hspace*{-2em}}

\newcommand{\sumstack}[2][]{\ifstrempty{#1}{\sum_{\substack{#2}}}{\smashoperator[#1]{\sum_{\substack{#2}}}}}

\DeclareMathOperator{\Dist}{dist}
\DeclareMathOperator{\Diam}{diam}

\DeclareMathOperator{\support}{supp}
\DeclareMathOperator{\tr}{tr}

\newcommand{\LdistSYM}{d}
\newcommand{\LdiamSYM}{\Diam}

\DeclarePairedDelimiter{\intervaloo}{\lparen}{\rparen}
\DeclarePairedDelimiter{\intervaloc}{\lparen}{\rbrack}
\DeclarePairedDelimiter{\intervalco}{\lbrack}{\rparen}
\DeclarePairedDelimiter{\intervalcc}{\lbrack}{\rbrack}

\DeclarePairedDelimiter{\abs}{\lvert}{\rvert}
\DeclarePairedDelimiter{\paren}{\lparen}{\rparen}
\DeclarePairedDelimiter{\norm}{\lVert}{\rVert}

\newcommand{\VERT}[1]{#1|\mkern-1.5mu#1|\mkern-1.5mu#1|}

\ExplSyntaxOn
\makeatletter

\MT_delim_default_inner_wrappers:n{\tnorm}

\NewDocumentCommand{\tnorm}{ s o m }{
    \IfBooleanTF{#1}{
    \MT_delim_tnorm_star_wrapper:nnn%
        {\VERT{\bgroup\left}}{#3}{\VERT{\aftergroup\egroup\right}}
    }{
        \IfValueTF{#2}{
            \@nameuse{MT_delim_tnorm_nostarscaled_wrapper:nnn}%
                {\VERT{\@nameuse {\MH_cs_to_str:N #2 l}}}
                {#3}
                {\VERT{\@nameuse {\MH_cs_to_str:N #2 r}}}
        }{
            \MT_delim_tnorm_nostarnonscaled_wrapper:nnn%
                {\VERT{}}
                {#3}
                {\VERT{}}
        }
    }
}

\makeatother
\ExplSyntaxOff

\DeclarePairedDelimiter{\commutator}{\lbrack}{\rbrack}

\DeclarePairedDelimiter{\List}{\{}{\}}
\DeclarePairedDelimiter{\ceil}{\lceil}{\rceil}

\DeclarePairedDelimiterXPP{\Distp}[1]{\Dist}{\lparen}{\rparen}{}{#1}
\DeclarePairedDelimiterXPP{\Diamp}[1]{\Diam}{\lparen}{\rparen}{}{#1}

\DeclarePairedDelimiterXPP{\dist}[1]{\LdistSYM}{\lparen}{\rparen}{}{#1}
\DeclarePairedDelimiterXPP{\diam}[1]{\LdiamSYM}{\lparen}{\rparen}{}{#1}

\DeclareDocumentCommand{\trace}{s o e{_} m}{
    \IfValueTF{#3}
        {\tr_{#3}}
        {\tr}%
    \IfBooleanTF{#1}
        {\paren*{#4}}
        {
            \IfValueTF{#2}
                {\paren[#2]{#4}}
                {\paren{#4}}%
        }%
}

\providecommand\given{}

\newcommand\SetSymbol[1][]{%
    \nonscript\,#1\vert
    \allowbreak
    \nonscript\,
    \mathopen{}}
\DeclarePairedDelimiterX\Set[1]\{\}{%
    \renewcommand\given{%
        \SetSymbol[\delimsize]}
    \nonscript\,
    #1
    \nonscript\,
}

\ExplSyntaxOn
\makeatletter
\cs_generate_variant:Nn \tl_if_eq:nnT {onT}
\tl_new:N \__scale_new_delimsize_tl
\cs_new:Nn \__scale_make_bigger_tl:N {
    \tl_if_eq:onT {#1} {\@MHempty} { \tl_set:Nn \__scale_new_delimsize_tl { big } }
    \tl_if_eq:onT {#1} {}          { \tl_set:Nn \__scale_new_delimsize_tl { big } }
    \tl_if_eq:onT {#1} {\big}      { \tl_set:Nn \__scale_new_delimsize_tl { Big } }
    \tl_if_eq:onT {#1} {\Big}      { \tl_set:Nn \__scale_new_delimsize_tl { bigg } }
    \tl_if_eq:onT {#1} {\bigg}     { \tl_set:Nn \__scale_new_delimsize_tl { Bigg } }
    \tl_if_eq:onT {#1} {\middle}   { \tl_set:Nn \__scale_new_delimsize_tl { middle } }
}
\cs_new:Nn \scale_make_bigger_m:N {
    \__scale_make_bigger_tl:N #1
    \use:c { \tl_use:N \__scale_new_delimsize_tl }
}
\cs_new:Nn \scale_make_bigger_l:N {
    \__scale_make_bigger_tl:N #1
    \tl_if_eq:NnTF \__scale_new_delimsize_tl {middle} {
        \tl_set:Nn \__scale_new_delimsize_tl {left}
    }{
        \tl_put_right:Nn \__scale_new_delimsize_tl {l}
    }
    \use:c { \tl_use:N \__scale_new_delimsize_tl }
}
\cs_new:Nn \scale_make_bigger_r:N {
    \__scale_make_bigger_tl:N #1
    \tl_if_eq:NnTF \__scale_new_delimsize_tl {middle} {
        \tl_set:Nn \__scale_new_delimsize_tl {right}
    }{
        \tl_put_right:Nn \__scale_new_delimsize_tl {r}
    }
    \use:c { \tl_use:N \__scale_new_delimsize_tl }
}

\DeclarePairedDelimiterXPP{\pdist}[1]{\scale_make_bigger_l:N\delimsize\lparen \LdistSYM}{\lparen}{\rparen}{\scale_make_bigger_r:N\delimsize\rparen}{#1}
\DeclarePairedDelimiterXPP{\pdiam}[1]{\scale_make_bigger_l:N\delimsize\lparen \LdiamSYM}{\lparen}{\rparen}{\scale_make_bigger_r:N\delimsize\rparen}{#1}

\makeatother
\ExplSyntaxOff

\usepackage{booktabs}

\usepackage[
    hidelinks,
    bookmarksnumbered,
    bookmarksopen,
    bookmarksopenlevel=3,
    bookmarksdepth=3,
    pdfusetitle=true,
    breaklinks=true,
]{hyperref}
\usepackage{bookmark}
\bookmarksetup{open,numbered}
\usepackage{orcidlink}

\usepackage{caption}
\usepackage{floatrow}
\usepackage{xurl}
\usepackage[
    bibstyle=phys,
    citestyle=numeric,
    biblabel=brackets,
    mincitenames=1,
    maxcitenames=3,
    minalphanames=3,
    maxalphanames=4,
    minbibnames=3,
    maxbibnames=10,
    arxiv=abs,
    eprint=true,
    doi=false,
    url=false,
    sorting=nty,
    useprefix=true,
    alldates=year,
    urldate=long,
]{biblatex}
\renewbibmacro*{note+pages}{%
    \printfield{note}%
    \setunit{\bibpagespunct}%
    \printfield{pages}%
    \newunit%
}
\makeatletter
\DeclareFieldFormat{eprint:arxiv}{%
    \ifhyperref
    {\href{https://arxiv.org/\abx@arxivpath/#1}{%
            arXiv\addcolon\linebreak[2]#1}}
    {arXiv\addcolon
        \nolinkurl{#1}%
    }
}
\makeatother

\renewbibmacro*{maintitle+title}{%
    \iffieldsequal{maintitle}{title}{%
        \clearfield{maintitle}%
        \clearfield{mainsubtitle}%
        \clearfield{maintitleaddon}%
    }{%
        \iffieldundef{maintitle}{}{%
            \printtext[doi/url-link]{%
                \usebibmacro{maintitle}%
            }%
            \newunit\newblock
            \iffieldundef{volume}{}{%
                \printfield{volume}%
                \printfield{part}%
                \setunit{\addcolon\space}%
            }%
        }%
    }%
    \printtext[doi/url-link]{%
        \usebibmacro{title}%
    }%
    \newunit%
}

\DeclareBibliographyDriver{article}{%
    \usebibmacro{bibindex}%
    \usebibmacro{begentry}%
    \usebibmacro{author/translator+others}%
    \setunit{\labelnamepunct}\newblock
    \usebibmacro{title}%
    \newunit
    \printlist{language}%
    \newunit\newblock
    \usebibmacro{byauthor}%
    \newunit\newblock
    \usebibmacro{bytranslator+others}%
    \newunit\newblock
    \printfield{version}%
    \newunit\newblock
    \printtext[doi/url-link]{%
        \usebibmacro{journal+issuetitle}%
        \newunit
        \usebibmacro{byeditor+others}%
        \newunit
        \usebibmacro{note+pages}%
        \newunit\newblock
        \iftoggle{bbx:isbn}
        {\printfield{issn}}
        {}%
    }%
    \newunit\newblock
    \printfield{pubstate}%
    \setunit{\addspace}%
    \printfield{year}%
    \newunit\newblock
    \usebibmacro{doi+eprint+url}%
    \newunit\newblock
    \printfield{addendum}%
    \setunit{\bibpagerefpunct}\newblock
    \usebibmacro{pageref}%
    \newunit\newblock
    \iftoggle{bbx:related}
        {\usebibmacro{related:init}%
        \usebibmacro{related}}
        {}%
    \usebibmacro{finentry}%
}

\makeatletter

\usepackage{amsthm}
\usepackage{thmtools}

\ifdefined\newcounteralias
    \renewcommand\thmt@autorefsetup{%
        \@xa\def\csname\thmt@envname autorefname\@xa\endcsname\@xa{\thmt@thmname}%
    }%
\fi

\def\assumptionthm@envname{assumptionthm}

\newcommand{\redefine@Htheorem@counters}{
    \expandafter\renewcommand\csname theH\thmt@envname\endcsname{\theHsection.\arabic{\thmt@envname}}%
}
\appto{\thmt@newtheorem@postdefinition}{\redefine@Htheorem@counters}

\declaretheorem[
    title=Proposition,%
    parent=section,%
    style=plain,%
]{prop}
\declaretheorem[
    title=Theorem,%
    sibling=prop,%
    style=plain,%
]{theorem}
\declaretheorem[
    title=Lemma,%
    sibling=prop,%
    style=plain,%
]{lem}

\declaretheorem[
    title=Remark,%
    sibling=prop,%
    style=definition,%
]{rmk}

\renewcommand{\redefine@Htheorem@counters}{}

\def\thmt@headstyle@assumption{%
    \ABBREVIATION\ \LONGNAME%
}

\declaretheoremstyle[
    headfont=\normalfont\bfseries,%
    notefont=\normalfont\bfseries,%
    notebraces={}{},%
    headpunct={},%
    headformat=assumption,%
]{assumption}

\declaretheorem[
    name=Assupmtion,%
    numbered=no,%
    style=assumption,%
    qed=\(\diamond\),%
]{assumptionthm}

\newenvironment{assumption}[2]{%
    \def\LONGNAME{#2}%
    \def\ABBREVIATION{#1}%
    \begin{assumptionthm}[%
        name=\upshape #1,%
    ]%
}{
    \end{assumptionthm}
}

\makeatother

\usepackage{tikz}
\usetikzlibrary{calc,cd}
\usetikzlibrary{arrows.meta}
\newlength{\lw}
\newlength{\ul}
\tikzset{%
    myLine/.style = {
        line width = #1\lw,
    },
    myLine/.default = 1,
    x = \ul,
    y = \ul,
    every layer/.style={
        myLine,
        text=black,
    },
    every picture/.style={
        x=\ul,
        y=\ul,
        every layer,
    },
    text = black,
    >={latex},
    shorten/.style = {
        shorten < = #1,
        shorten > = #1
    },
    shorten/.default = 2\lw,
}

\newcommand{\epsi}{\varepsilon}
\newcommand{\E}{{\mathrm{e}}}
\newcommand{\e}{\E}
\newcommand{\I}{\mathrm{i}}

\newcommand{\N}{ \mathbb{N} }
\newcommand{\Z}{ \mathbb{Z} }

\newcommand{\R}{ \mathbb{R} }
\newcommand{\C}{ \mathbb{C} }

\newcommand{\HSpace}{\mathcal{H}}
\newcommand{\calI}{\mathcal{I}}
\newcommand{\calL}{\mathcal{L}}

\newcommand{\unit}{\mathbf{1}}

\newcommand{\Alg}{\mathcal{A}}
\newcommand{\alg}{\Alg}

\newcommand{\D}{\mathrm{d}}

\newcommand{\interpunct}{\mathop{\cdot}}

\newcommand{\vLR}{v_{\mathrm{LR}}}

\newcommand{\fCONV}{f_{\mathrm{conv}}}
\newcommand{\zeCONV}{\zeta_{\mathrm{conv}}}
\newcommand{\fLI}{f_{\mathrm{LI}}}
\newcommand{\zeLI}{\zeta_{\mathrm{LI}}}

\newcommand{\Cvol}{C_{\mathup{vol}}}
\newcommand{\Cint}{C_{\mathup{int}}}
\newcommand{\Cswitch}{C_{\mathup{switch}}}

\newcommand{\Lambdag}{{\Lambda^{\mathup{gap}}}}
\newcommand{\Lambdap}{{\Lambda^{\mathup{pert}}}}

\newcommand{\LI}{\mathrm{LI}}

\newcommand{\LR}{\mathup{LR}}

\DeclarePairedDelimiterX{\innerp}[2]{\langle}{\rangle}{#1,#2}
\newcommand{\suppvshift}[1]{B_R(#1)}

\title{Response theory at low temperature for weakly interacting quantum spin systems}

\author{
    Joscha Henheik%
    \texorpdfstring{%
        \,\orcidlink{0000-0003-1106-327X}%
        \footnote{
            Mathematical Institute, University of Oxford, Andrew Wiles Building, Radcliffe Observatory Quarter, Woodstock Road, Oxford OX2 6GG, United Kingdom \newline
            \mbox{E-mail}:~joscha.henheik@maths.ox.ac.uk
        }
    }{}%
    \and
    Marius Lemm%
    \texorpdfstring{%
        \,\orcidlink{0000-0001-6459-8046}%
        \footnote{
            Fachbereich Mathematik,
            Universität Tübingen,
            Auf der Morgenstelle 10,
            72076 Tübingen,
            Germany.\newline
            \mbox{E-mail}:~marius.lemm@uni-tuebingen.de
        }
    }{}
    \and
    Tom Wessel%
    \texorpdfstring{%
        \,\orcidlink{0000-0001-7593-0913}%
        \footnote{
            Department of Mathematics, Technical University of Munich,
            Boltzmannstr.~3,
            85748 Garching,
            Germany\newline
            Munich Center for Quantum Science and Technology,
            80799 München,
            Germany\newline
            \mbox{E-mail}:~tom.wessel@tum.de
        }
    }{}
}

        \date{October 16, 2026}

\begin{document}

\maketitle

\begin{abstract}
    Recent work by Greenblatt, Lange, Marcelli, and Porta~\cite{GLMP24} provides the first rigorous justification of adiabatic theory for interacting quantum systems at \textit{positive} temperature.
    Their analysis treats weakly interacting lattice fermions and combines a rigorous Wick rotation with fermionic cluster expansion techniques.
    Here, we study a class of weakly interacting quantum spin systems and derive response theory at positive temperature, thereby covering non-Fermi systems for the first time.
    Our proof identifies and exploits the locality property of \textit{local indistinguishability} as a key ingredient in carrying out low-temperature adiabatic expansions.
    The fact that these systems satisfy local indistinguishability was recently established jointly with Adhikari~\cite{AHLW25} using cluster expansion methods together with a probabilistic swapping trick.
\end{abstract}

\section{Introduction and main result}
\label{sec:intro}

An important topic in the mathematical analysis of non-equilibrium physics is the response of a quantum system to adiabatic switching, i.e., to slowly turning on a weak perturbation.
Of particular interest is the justification of \textit{Kubo's formula}~\cite{K57} and its higher-order refinements.
The rigorous response theory of this kind is closely connected to adiabatic theory, which was pioneered by Kato~\cite{K50} and generalized by Nenciu~\cite{N80}.
An influential interpretation through the geometry of vector bundles was discovered by Berry~\cite{B84,S83}.
For subsequent developments of the by now classical adiabatic theory, such as space-adiabatic perturbation theory, we refer to the book~\cite{T03} by Teufel.

Beyond the classical framework, two fundamental obstacles arise when studying adiabatic theory for quantum \textit{many-body} systems: First, the relevant perturbations tend to close the spectral gap that was crucial to standard adiabatic theory.
Second, the error terms diverge with growing system size, making them thermodynamically unstable for extended systems.
In the past years, it has been realized that the key to circumvent both of these obstacles is to exploit locality properties of gapped ground states.
Indeed, Hastings' spectral flow~\cite{HW2005,BMNS2012} has allowed to replace the standard Kato construction of the inverse Liouvillian operator by a quasi-local variant.
Locality provides both dynamical robustness even when the gap closes and error estimates that are uniform in volume, hence thermodynamically stable.
These ideas have been used to derive response theory for local observables in gapped ground states and to prove quantization of the Hall conductance at zero temperature in various settings~\cite{H15,GMP17,BDF17,BDF18,MT2019,Teufel2020,HT2020bulk,HT2020unif,HW2022,HW2024,Wesle2025,Teufel2025,Becker2025}.
We also mention that linear response theory can also be studied outside of an adiabatic framework, typically in a weak coupling regime; see e.g.~\cite{SL78,J06}.\\

Recent work~\cite{GLMP24} by Greenblatt, Lange, Marcelli, and Porta then initiated rigorous response theory for \textit{positive temperature} many-body systems.
Now, the initial state is an equilibrium (Gibbs) state at temperature \(T>0\).
This is the physically more realistic situation, but it requires fundamentally new mathematical techniques to treat mixed states instead of gapped pure states.
We also mention the recent companion works~\cite{JPT24a,JPT24b}, which studied structural aspects of adiabatic theory at positive temperature, as well as a recent physics paper concerned with adiabatic theory at positive temperature~\cite{Ilin2021}.

In~\cite{GLMP24}, the authors consider a system of weakly interacting lattice fermions in \(D\) dimensions subjected to a weak and adiabatically slow perturbation.
They derive response theory through a convergent expansion in terms of the small perturbation size.
Their strategy consists of two main steps: First, they use Lieb-Robinson bounds to justify a rigorous version of the Wick rotation which reformulates \(k\)-point dynamical Duhamel correlation functions as \(k\)-point Euclidean correlation functions.
Second, they calculate these Euclidean correlation functions by fermionic cluster expansion techniques, which heavily exploit diagrammatic cancellations that arise from the antisymmetry.
The errors they obtain are small provided that the temperature \(T\) satisfies \(T\ll \eta^{D+2}\), where \(\eta\in \intervaloc{0,1}\) is a small parameter denoting the inverse of the long time scale on which the adiabatically slow switching occurs.\\

\subsection{Main result}
\label{sec:intro-main-result}

In the present paper, we give the first derivation of response theory for non-Fermi quantum many-body systems: On \(\Z^D\), equipped with the usual graph \(\ell^1\)-metric \(d\colon \Z^D \times \Z^D \to \N_0\), denote an arbitrary finite connected subset by \(\Lambda \Subset \Z^D\).
With every site \(x \in \Z^D\), we associate a finite-dimensional Hilbert space \(\HSpace_x := \C^q\) with \(q \ge 2\) and the corresponding space of linear operators \(\Alg_x := \calL(\HSpace_x)\).
For every \(X\subset \Lambda\) we introduce the many-particle Hilbert space on \(X\) as \(\HSpace_X := \bigotimes_{x \in X} \HSpace_x\) and denote the associated \(C^*\)-algebra of observables by \(\Alg_{X} := \calL(\HSpace_X)\).
Note that, whenever \(X\subset X'\), we have that \(\Alg_{X}\) is naturally embedded as a subalgebra of \(\Alg_{X'}\).

We consider self-adjoint Hamiltonians as the sum of an interacting and non-interacting part,
\begin{equation}
    \label{eq:H0def}
    H_0
    =
    \sum_{x\in \Lambda} h_x
    + \sumstack[lr]{x \in \Lambda : \\ B_R(x) \subset \Lambda} j_x
    .
\end{equation}
Here, in the non-interacting part, every \(h_x \ge 0\) acts only on the site \(x \in \Lambda\), \(h_x \in \Alg_{\List{x}}\), and has a unique zero-energy ground state separated from the rest of the spectrum by a gap of size at least one.
In the interacting part, every \(j_x\) is a small finite-range interaction supported on \(B_R(x)\), a ball of radius \(R\) around site \(x\), i.e.~\(j_x \in \Alg_{B_R(x)}\), and we denote \(\norm{h}_\infty = \sup_{x \in \Z^D} \norm{h_x}_\infty\) and \(\norm{j}_\infty = \sup_{x \in \Z^D} \norm{j_x}_\infty\).
Additionally, we assume that every \(j_x\) is relatively form bounded with respect to the sum of all \(h_x\) in \(B_R(x)\) in the following sense: There exists a small \(a \in \intervaloo{0,1}\) such that
\begin{equation}
    \label{eq:formbound}
    \abs{\innerp{\psi}{j_x \, \psi}}
    \leq
    \frac{a}{\kappa_R} \, \abs[\Big]{\innerp[\Big]{\psi}{\sumstack[lr]{z\in \suppvshift{x}} h_z \, \psi}}
    \qquad\text{for all states \(\psi\) and } \kappa_R=\sup_{x\in \Lambda}\abs{B_R(x)}
\end{equation}
with \(B_r(x):= \Set{y \in \Lambda \given \dist{x,y} \le r}\).
Examples of Hamiltonians that we cover are Heisenberg \(XXZ\) Hamiltonians on any \(D\)-dimensional lattice, subjected to a strong external field in the \(Z\)-direction (which then plays the role of \(h_x\)).
In particular, \eqref{eq:formbound} implies that the product ground state of \(\sum_x h_x\) remains a ground state of \(H_0\).

Starting from the equilibrium state
\begin{equation*}
    \rho_\beta
    =
    \frac{\E^{-\beta H_0}}{\trace{\E^{-\beta H_0}}}
\end{equation*}
as the initial state, we study the response of \(H_0\) from~\eqref{eq:H0def} to a small static perturbation \(\epsi V\) which is slowly (\emph{adiabatically}) turned on.
That is, given a small parameter \(\epsi>0\) and a smooth switching function \(f\colon \R \to \intervalcc{0,1}\) with \(f(t) = 0\) for \(t \le - 1\) and \(f(t) = 1\) for \(t \ge 0\), we consider the time-dependent Hamiltonian
\begin{equation}
    \label{eq:H(t)intro}
    H_{\epsi}(t)
    =
    H_0 + \epsi \, f(t) \, V
    ,\qquad
    V
    =
    \sumstack[lr]{x \in \Lambda : \\ B_R(x) \subset \Lambda} v_x
    ,
\end{equation}
where \(v_x \in \Alg_{B_R(x)}\) is self-adjoint.

To study adiabatic time scales, we fix another small parameter \(\eta>0\).
We write \(\rho_\beta^{\epsi, \eta, f}(t)\) for the solution of the time-dependent adiabatic Schrödinger equation
\begin{equation}
    \label{eq:Schrödinger-equation-switchingintro}
    \I \, \eta \, \frac{\D }{\D t} \rho^{\epsi, \eta, f}_\beta(t)
    =
    \commutator[\big]{H_\epsi(t) , \rho_\beta^{\epsi, \eta, f}(t)}
\end{equation}
and initial datum \(\rho_\beta^{\epsi, \eta, f}(t) = \rho_\beta\) for all \(t \le -1\), where \(\rho_{\beta}\) is the Gibbs state of \(H_0\) at inverse temperature \(\beta \in \intervaloo{0,\infty}\).
For an observable \(A \in \alg_X\), the response to the perturbation is now given by
\begin{equation*}
    \Sigma_{\beta , A}^{\epsi, \eta, f}(t)
    :=
    \trace[\big]{A \, \rho_\beta^{\epsi, \eta, f}(t)}
    - \trace[\big]{A \, \rho_\beta}
    .
\end{equation*}

Our main result, Theorem~\ref{thm:main} below, establishes an expansion of \(\Sigma_{\beta , A}^{\epsi, \eta, f}(t)\) to any order in \(\epsi\) with an additional temperature dependent error term, i.e.~\emph{response theory to any order for low temperature Gibbs states, uniformly in the size of the system}.

\begin{theorem}[Response theory at low temperature]
    \label{thm:main}
    Let \(D\),~\(q\),~\(R \in \N\), and \(\Cint\),~\(\Cswitch>0\).
    Then there exist constants~\(\vLR > 0\) and \(a \in \intervaloo{0,1}\) such that the following holds:
    For all \(n\),~\(m\in \N\) and \(\alpha\in \intervaloo{0,1}\) there exist constants \(C\),~\(c>0\) such that for all \(\Lambda \Subset \Z^D\) and Hamiltonians \(H_0\),~\(V\) as defined in~\eqref{eq:H0def} with \(\norm{h}_\infty\),~\(\norm{j}_\infty\),~\(\norm{v}_\infty \le \Cint\), \(j_x\) and \(v_x\) are of range \(R \in \N\), and \(j_x\) is relatively \(a\)-bounded w.r.t.~the \(h_x\)'s as in~\eqref{eq:formbound}, the following holds:

    Take any inverse temperature \(\beta \in \intervaloo{0,\infty}\), let \(X\subset \Lambda\) be connected and take \(A\in \alg_X\).
    Then there exist response coefficients \(\sigma_{\beta,j}^A\) in the following sense:
    For \(\varepsilon>0\) and smooth switching function \(f\colon \R \to \intervalcc{0,1}\) satisfying \(f(t) = 0\) for \(t \le -1\), \(f(t) = 1\) for \(t \ge 0\), and \(\norm{f}_{C^{\mathfrak{C}_{n,m}}(\R)}<\Cswitch\) with \(\mathfrak{C}_{n,m} := \ceil{m(n + 2D+1)}\), consider the time-dependent Hamiltonian~\eqref{eq:H(t)intro}.
    Let \(\rho_\beta^{\epsi, \eta, f}(t)\) be the solution of the time-dependent adiabatic Schrödinger equation~\eqref{eq:Schrödinger-equation-switchingintro} with adiabatic parameter \(\eta \in \intervaloc{0,1}\) and initial datum \(\rho_\beta^{\epsi, \eta, f}(t) = \rho_\beta\) for all \(t \le -1\), where \(\rho_{\beta}\) is the Gibbs state of \(H_0\) at inverse temperature \(\beta \in \intervaloo{0,\infty}\).

    Then, the response to the perturbation satisfies
    \begin{equation}
        \label{eq:linearresponseintro}
        \begin{aligned}
            \Alignindent
            \abs[\bigg]{\Sigma_{\beta, A}^{\epsi, \eta, f}(t) - \sum_{j=1}^{n} \epsi^j \, \sigma_{\beta,j}^A}
            \\&\le
            C \, \norm{A} \, \E^{\abs{X}^{2/\alpha}} \, \epsi^{n+1} \, \paren[\Big]{\paren[\big]{1 + t}^{2D + 1} + \E^{\vLR(1+t)/\eta - (n+1) \log(\epsi)} \, \E^{-c \beta^{(1 - \alpha)/D}}}
            ,
        \end{aligned}
    \end{equation}
    for any \(\alpha \in \intervaloo{0,1}\), \(\epsi\in \intervaloo{0,1}\), all \(t\geq 0\), and uniformly for \(\eta \in \intervalcc[\big]{\epsi^m , \epsi^{1/m}}\).
    The first order coefficient is given by \emph{Kubo's formula}, see~\eqref{eq:Kubo} below.
\end{theorem}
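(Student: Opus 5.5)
The plan is to reduce the finite-temperature problem to the zero-temperature one, for which adiabatic response theory with volume-uniform errors is known (via Hastings' spectral flow / quasi-local inverse Liouvillian, cf.~\cite{Teufel2020,HT2020bulk,MT2019}). The bridge between the two is \emph{local indistinguishability} of \(\rho_\beta\) from the ground state \(P_0\) of \(H_0\), established in~\cite{AHLW25}. By~\eqref{eq:formbound}, \(P_0\) is the product ground state of \(\sum_x h_x\). Moreover, \(H_0\) has a gap of order \(1-a\), uniformly in \(\Lambda\), since \(H_0 \ge (1-a)\sum_x h_x\) on the orthogonal complement of \(P_0\). In the form I expect to use, the input from~\cite{AHLW25} reads: for \(B\in\Alg_Y\),
\[
\abs{\trace{B\rho_\beta}-\trace{BP_0}} \le C\norm{B}\E^{\abs{Y}^{2/\alpha}}\E^{-c\beta^{(1-\alpha)/D}}.
\]

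First, I define the coefficients \(\sigma^A_{\beta,j}\), with the \(\beta\)-dependence entering only through the initial state. Concretely, let \(\rho^{\epsi,\eta,f}_{0}(t)\) denote the solution of~\eqref{eq:Schrödinger-equation-switchingintro} with initial datum \(P_0\). The zero-temperature adiabatic theorem then gives an \(n\)-th order superadiabatic state \(\Pi^\epsi_n(t)\) close to \(\rho^{\epsi,\eta,f}_0(t)\) on local observables. Expanding \(\trace{A\,\rho_0^{\epsi,\eta,f}(t)}\) in powers of \(\epsi\) yields coefficients given by nested Duhamel commutators \(\int\cdots\int \trace{P_0\,[\tau_{s_1}(V),[\dots,\tau_{s_j}(A)]]}\), with \(\tau\) the Heisenberg dynamics of \(H_0\). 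Replacing \(P_0\) by \(\rho_\beta\) gives \(\sigma^A_{\beta,j}\); for \(j=1\) this is Kubo's formula.

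Second, I split the total error into two parts. The term \(\epsi^{n+1}(1+t)^{2D+1}\) comes from the zero-temperature analysis. There, uniformity in \(\eta\in[\epsi^m,\epsi^{1/m}]\) is obtained by truncating the superadiabatic expansion at order \(\sim m(n+2D+1)\), which is why \(f\in C^{\mathfrak C_{n,m}}\) is required. The polynomial growth in \(t\) comes from Lieb--Robinson light cones of volume \(\sim(1+t)^D\) entering twice.

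Third, I control the temperature error, namely the difference between the \(\rho_\beta\)- and \(P_0\)-versions of the Dyson/Duhamel expansion truncated at order \(n\), together with its remainder. Writing \(\trace{A\rho_\beta^{\epsi,\eta,f}(t)}=\trace{\rho_\beta\, U^*AU}\), I approximate \(U^*AU\) by a strictly local operator supported on the Lieb--Robinson cone \(X_t\) of radius \(\vLR(1+t)/\eta\); the tail is super-polynomially small. Local indistinguishability applied to this operator gives an error \(\E^{\abs{X_t}^{2/\alpha}}\E^{-c\beta^{(1-\alpha)/D}}\). Doing this naively gives a superexponential dependence on \(t/\eta\), so I would need a version of indistinguishability that is linear-exponential in the diameter, or equivalently a splitting into \(\abs{X_t}\) local terms. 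This is the step I expect to be the main obstacle: getting the prefactor \(\E^{\vLR(1+t)/\eta}\) rather than \(\E^{(t/\eta)^{2D/\alpha}}\). I would first try to telescope the dynamics site by site, so that each term is localized near \(X\) and contributes only \(\E^{\abs{X}^{2/\alpha}}\). The remaining combinatorial growth is then bounded by the Lieb--Robinson prefactor \(\E^{\vLR(1+t)/\eta}\). The factor \(\epsi^{-(n+1)}\) arises from comparing full and truncated expansions, since the remainder need not be small. Finally, the bounds are combined, and the LR velocity and \(a\) are fixed depending only on \(D,q,R,\Cint\), so that~\cite{AHLW25} and the uniform gap apply.
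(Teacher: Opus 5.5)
There is a genuine gap, and it sits exactly at the step you flag as the main obstacle. Localizing $\tau^{\epsi,\eta,f}_{t,-1}(A)$ on the light cone $X_t$ forces you to pay the Gibbs-versus-ground-state prefactor on a $t$-dependent region. Your site-by-site telescoping does not fix this as described: the pieces of such a decomposition live on growing shells around $X$, not near $X$, so they cannot each cost only $\E^{\abs{X}^{2/\alpha}}$.

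The missing idea is to decouple the localization radius from the light cone. Set $F := \tau_{t,-1}^{\epsi,\eta,f}(A) - A - \sum_{j=1}^n \epsi^j \mathcal K_j(A)$, so that the temperature error is $\trace{F(\rho_\beta-\rho_\infty)}$. Split $F = \Pi_{X_r}F + (1-\Pi_{X_r})F$ using the partial-trace conditional expectation at a radius $r$ that does \emph{not} depend on $t$.
\begin{itemize}
\item The Lieb--Robinson bound gives $\norm{(1-\Pi_{X_r})\tau_{t,-1}^{\epsi,\eta,f}(A)} \le C\norm{A}\abs{X}\E^{\vLR(1+t)/\eta}\chi_{b',p'}(r)$ for \emph{every} $r>0$, inside the cone as well.
\item The $\mathcal K_j(A)$ have $t$-independent tails $\chi_{b',p'}(r)$.
\item The comparison of $\rho_\beta$ with $\rho_\infty$ is applied only to $\Pi_{X_r}F\in\Alg_{X_r}$, where $\abs{X_r}\le\abs{X}(1+\Cvol r^D)$ is independent of $t$.
\end{itemize}
One then minimizes over $r$ (Theorem~\ref{thm:response-theory-all-orders}). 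So the factor $\E^{\vLR(1+t)/\eta}$ in the theorem is just the Lieb--Robinson growth factor attached to the spatial decay $\chi_{b',p'}(r)$. It is not a combinatorial count.

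This optimization also needs a stronger input than the one you posit, and that input does not come directly from \cite{AHLW25}. That work proves local indistinguishability of $\rho^\Lambda_\beta$ and $\rho^{\Lambda'}_\beta$ at the \emph{same} $\beta$, uniformly in $\beta$, with prefactor $C_1\E^{C_2\abs{X}}$ and decay $\E^{-\dist{X,\Lambda\setminus\Lambda'}/\xi_{\LI}}$. It says nothing directly about $P_0$. The paper bridges this in Proposition~\ref{prop:niceCONV} and Theorem~\ref{thm:Yarconv} in three steps:
\begin{enumerate}
\item Extend LI to $\beta=\infty$, using nondegeneracy of the local ground states.
\item Use the triangle inequality through the box $\Lambda'=X_r$.
\item Bound $\tr\abs{\rho^{\Lambda'}_\beta-\rho^{\Lambda'}_\infty} = 2(Z^{\Lambda'}_\beta-1)/Z^{\Lambda'}_\beta$ via a controlled low-energy density of states. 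The form bound yields this by min--max: there are at most $\sum_{\ell<k}\binom{\abs{\Lambda'}}{\ell}(q-1)^\ell$ eigenvalues below $(1-a)k$.
\end{enumerate}
The outcome is $\abs{\trace{A\rho_\beta}-\trace{A\rho_\infty}}\le C_1\norm{A}\E^{C_2\abs{X}}\E^{-C_3\beta}$, which is exponential in the volume and in $\beta$.

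Your form $\E^{\abs{Y}^{2/\alpha}}\E^{-c\beta^{(1-\alpha)/D}}$ is roughly what LI plus the gap alone gives (Proposition~\ref{prop:niceCONV}(ii)), and it is too weak. To make $\chi_{b',p'}(r)$ as small as $\E^{-c\beta^{(1-\alpha)/D}}$ you need $r^D \ge C\beta^{(1-\alpha)/p'}$. Then $\E^{\abs{X_r}^{2/\alpha}}$ grows faster than $\E^{-c\beta^{(1-\alpha)/D}}$ decays.

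With the exponential-rate bound, the optimization works. Use Young's inequality $\abs{X}r^D\le\alpha\abs{X}^{1/\alpha}+(1-\alpha)r^{D/(1-\alpha)}$ and choose $r^{D/(1-\alpha)}=c''\beta$ with $c''$ small. The local part is then at most $C\E^{C\abs{X}^{1/\alpha}}\E^{-C_3\beta/2}$ and the tail at most $C\E^{\vLR(1+t)/\eta}\chi_{b',p'}(r)$. After adjusting $p'$ and $\alpha$, this is the second term of~\eqref{eq:linearresponseintro}. In particular, the stretched exponent $(1-\alpha)/D$ and the prefactor $\E^{\abs{X}^{2/\alpha}}$ are outputs of the $r$-optimization, not properties of the Gibbs state. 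The factor $\E^{-(n+1)\log\epsi}$ merely cancels $\epsi^{n+1}$, since this error carries no power of $\epsi$.
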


For \(\beta = \infty\), our main result recovers the usual well known response theory for the unique ground state of \(H_0\), see e.g.~\cite[Theorem~4.1]{Teufel2020}.
Moreover, it shows that, whenever the inverse temperature \(\beta = 1/T\) satisfies
\begin{equation}
    \label{eq:templowerbound}
    \beta
    \ge
    \paren[\bigg]{\frac{\vLR(1+t)}{c\eta} - \frac{(n+1)}{c} \log(\epsi)}^{\frac{D}{1 - \alpha}}
\end{equation}
we effectively have response theory up to order \(n\).
This should be compared to the single previous rigorous work on positive temperature response theory for extended systems~\cite{GLMP24}, which relied on the condition
\begin{equation*}
    \beta \gg \frac{1}{\eta^{D+2}}
\end{equation*}
for \emph{linear} response theory to be valid.
Taking \(\alpha \ll 1\), our bound~\eqref{eq:templowerbound} is slightly stronger, i.e.~we can allow larger temperatures.
Moreover, it is obtained by completely different methods and holds for quantum spin systems instead of lattice Fermions which were considered in~\cite{GLMP24}.
Furthermore, our bound drastically improves if the Hamiltonian~\eqref{eq:H(t)intro} has slow transport properties, as we now explain.

\begin{rmk}[Effect of slower transport]
    \label{rmk:slow}
    The modular nature of the proof immediately implies an improved final result if a stronger Lieb-Robinson bound~\cite{LR1972} holds.
    More precisely, whenever the dynamics \(\tau_{t,s}\) generated by the Hamiltonian~\eqref{eq:H(t)intro} satisfies a Lieb-Robinson bound of the form
    \begin{equation*}
        \norm[\big]{\commutator{\tau_{t,s}(A), B}}
        \le
        C \, \norm{A} \, \norm{B} \, \min\List{\abs{X},\abs{Y}} \, f_{\LR}(\abs{t-s}) \, \E^{-\dist{X,Y}}
    \end{equation*}
    for observables \(A \in \alg_X\) and \(B \in \alg_Y\), we can make the replacement
    \begin{equation*}
        \E^{\vLR(1+t)/\eta} \to f_{\LR}\paren[\big]{(1+t)/\eta}
    \end{equation*}
    in the second line of~\eqref{eq:linearresponseintro}.

    As an example, if \(H_0\) is \emph{locally commuting} then the LR velocity is bounded by \(\sqrt{\epsi}\)~\cite{PHKM10} and so
    \begin{equation*}
        \norm[\big]{\commutator{\tau_{t,s}(A), B}}
        \le
        C \, \norm{A} \, \norm{B} \, \min\List{\abs{X},\abs{Y}} \, \E^{C \, \epsi^{1/2} \, \abs{t-s}} \, \E^{-\dist{X,Y}}
        .
    \end{equation*}
    In this case, the lower bound from~\eqref{eq:templowerbound} drastically improves to
    \begin{equation*}
        \beta
        \ge
        \paren[\bigg]{\frac{C(1+t)}{c} \frac{\epsi^{1/2}}{\eta} - \frac{(n+1)}{c} \log(\epsi)}^{\frac{D}{1 - \alpha}}
        .
    \end{equation*}
    In particular, for \(\epsi = \eta^2\) and neglecting time, we only need \(\beta \gg \abs{\log(\eta)}^{\frac{D}{1 - \alpha}}\).
    Another class of systems where substantially improved Lieb-Robinson bounds can be established are strongly disordered quantum spin systems~\cite{baldwin2023disordered,de2024absence,mcdonough2025non,elgart2026many} and long-range commuting interactions~\cite{LW25}.
\end{rmk}

\subsection{About the proof}
Our proof technique is completely different from that of~\cite{GLMP24}; we do not use the Wick rotation and fermionic cluster expansion techniques are unavailable here.
Instead, we identify and exploit two key drivers to connect low-temperature adiabatic theory to ground state expansions in a way that is uniform in the system size:
\begin{itemize}
    \item \textit{local indistinguishability (LI)}, uniformly in the temperature
    \item \textit{controlled low-energy density of states (CLEDOS)}.
\end{itemize}
Local indistinguishability is a locality concept for Gibbs states originating in quantum information theory~\cite{KGKRE14}, where it was dubbed \emph{locality of temperature}.
It is closely related to exponential decay of correlations~\cite{CMTW25} and establishing it at low temperature is a substantial task in its own right, because it essentially amounts to excluding phase transitions at low temperature.
Controlled low-energy density of states was identified as an important condition by Hastings in~\cite{H07} and further discussed in~\cite{M09}.
Roughly speaking, CLEDOS says that the density of states of low-lying energy intervals is at most polynomial in system size.
CLEDOS has to our knowledge only been verified for non-interacting systems, but is widely expected to hold much more generally for gapped systems, though, there are also gapped Hamiltonians known in which an exponential-in-system-size ground state degeneracy occurs for open boundary conditions~\cite{H07,M09}.

Our proof is then structured in an abstract, modular fashion as displayed in Figure~\ref{fig:overview}.
\begin{figure}
    \centering
    \begin{tikzpicture}[
            node/.style={
                draw,
                rectangle,
                rounded corners,
                minimum width=2.8cm,
                minimum height=1cm,
                inner sep=4\lw,
                align=center,
                node font=\footnotesize,
            },
            arrow/.style={->, thick, shorten},
            backarrow/.style={->, thick, dashed, shorten}
        ]

        \begin{scope}[every node/.append style={node}]
            \node (LI) at (0,2) {
                local indist. \textbf{\nameref{ass:LI}}
                \\ \(\rho_\beta^{\Lambda} \approx \rho_\beta^{\Lambda'}\)
            };
            \node[right] (CONV) at ([xshift=2\ul]LI.east) {
                Gibbs state conv. \textbf{\nameref{ass:conv}}
                \\ \(\rho_\beta^{\Lambda} \xrightarrow{\beta \to \infty} \rho_\infty^{\Lambda}\)
            };
            \node[right] (response) at ([xshift=2\ul]CONV.east) {
                response theory for
                \\ Gibbs states with \(\beta \to \infty\)
            };

            \node[right] (CLEDOS) at (LI.west |- 0,0) {
                controlled low-energy
                \\ DOS \textbf{\nameref{ass:CLEDOS}}
            };
            \node (UGGS) at (CONV |- 0,0) {
                unique gapped
                \\ ground state \textbf{\nameref{ass:gapGS}}
            };
        \end{scope}

        \draw[arrow] (LI) -- (CONV) node[midway, below] (LI to CONV) {\small Prop.~\ref{prop:niceCONV}};
        \draw[arrow] (CONV) -- (response) node[midway, below] (CONV to response) {\small Thm.~\ref{thm:response-theory-all-orders}};

        \begin{scope}[backarrow]
            \draw (CLEDOS) to[bend right=20] (LI to CONV);
            \draw (UGGS) to[bend left=20] (LI to CONV);
            \draw (UGGS) to[bend right=20] (CONV to response);
        \end{scope}
    \end{tikzpicture}
    \caption{
        Structure of the argument.
        Taking \nameref{ass:LI} for the Hamiltonian~\eqref{eq:H0def} as input, in Theorem~\ref{thm:Yarconv} we verify Assumptions~\nameref{ass:conv} and~\nameref{ass:gapGS} based on Proposition~\ref{prop:niceCONV}.
    }
    \label{fig:overview}
\end{figure}
Roughly speaking, local indistinguishability and the controlled low-energy spectrum enable us to approximate the Gibbs state by a local Gibbs state.
Then, the combination of a unique gapped ground state and CLEDOS allows us to approximate the local Gibbs state by the local ground state, the idea being that the gap quantifies the energetic cost and CLEDOS the degeneracy of excited states.
For the local ground state, we can then apply zero-temperature adiabatic theory uniformly in the system size, as done e.g.~in~\cite{MT2019, Teufel2020, HT2020bulk, HW2022}.
A side aspect of the modular approach is that it is easy to track how improvements of different ingredients affect the main result.
For example, stronger Lieb-Robinson bounds (e.g.~for strongly disordered systems, which are covered by our analysis) directly improve the scaling of \(\eta\) with temperature, cf.\ Remark~\ref{rmk:slow}.

Importantly, the class of Hamiltonians \(H_0\) introduced in~\eqref{eq:H0def} provides examples where the modular assumptions, LI and CLEDOS, can be verified.
Indeed, the proof that LI holds uniformly in the temperature was given in our recent joint work~\cite{AHLW25} with Adhikari.
The fact that CLEDOS holds for~\eqref{eq:H0def} is an easy consequence of the variational principle, but it does not seem to have been observed before that CLEDOS is stable under such perturbations and thus it provides a wider class of Hamiltonians where CLEDOS is known compared to~\cite{H07,M09}.

\section{Mathematical setup}
\label{subsec:setup}

In this section, we briefly introduce the (standard) mathematical framework for locally interacting quantum spin systems (see e.g.~\cite{Teufel2020,HT2020unif,HT2020bulk,BRF2018, HW2024} for similar setups), which generalizes the one used in the introduction.

\subsection{Spatial structure and algebra of observables}

We consider a quantum spin system on a finite connected graph~\(\Lambda\) equipped with the usual graph distance \(\dist{\interpunct, \interpunct}\).
Denote the ball of radius \(r\) centred at \(x \in \Lambda\) by \(B_r(x):= \Set{y \in \Lambda \given \dist{x,y} \le r}\).
The graph \(\Lambda\) is assumed to have dimension (at most) \(D >0\), in the sense that there exists a constant \(\Cvol > 0\) such that
\begin{equation}
    \label{eq:dimension}
    \sup_{x \in \Lambda} \abs{B_r(x)} \le 1 + \Cvol \, r^D
    ,
\end{equation}
uniformly in \(r\), where \(\abs{X}\) denotes the number of sites in \(X \subset \Lambda\).
The set of all such graphs \(\Lambda\) is denoted by
\begin{equation*}
    \mathcal{G}(D, \Cvol)
    :=
    \Set[\Big]{
        \Lambda \text{ finite graph}
        \given
        \sup_{x \in \Lambda} \, \abs{B_r(x)} \le 1 + \Cvol \, r^D
        \text{ for all } r > 0
    }
    .
\end{equation*}

To each vertex \(x \in \Lambda\), we associate a single-particle Hilbert space \(\HSpace_x\), which we assume to be of uniformly bounded dimension,
\begin{equation}
    \label{eq:ddef}
    q := \sup_{x \in \Lambda} \dim \HSpace_x < \infty
    .
\end{equation}
We then denote the many-particle Hilbert space \(\HSpace_\Lambda = \bigotimes_{x\in \Lambda} \HSpace_x\) and the C*-algebra of observables \(\Alg_\Lambda := \calL(\HSpace_\Lambda)\).
For every \(X\subset \Lambda\) there is a subalgebra \(\Alg_X\subset \Alg_\Lambda\) isomorphic to \(\calL(\HSpace_X)\).

Note that we will obtain bounds uniform in the graph \(\Lambda\in \mathcal{G}(D, \Cvol)\), but unlike the more common setups, we do not consider a fixed infinite volume~\(\Gamma\) and subsets \(\Lambda \Subset \Gamma\).
Instead, our setup naturally covers torus and cylinder geometries directly.

Since a very similar construction is common for fermionic lattice systems (see e.g.~\cite{NSY2017, HW2022}), all the results almost immediately translate to this setting.

\subsection{Interactions and SLT operators}
\label{subsubsec:interactions}
An \emph{interaction} is a map
\begin{equation*}
    \Phi \colon \Set{X \subset \Lambda} \to \Alg_\Lambda , \ X \mapsto \Phi(X)
    =
    \Phi(X)^*\in \Alg_X
    .
\end{equation*}
Given an interaction one associates a (family of) \emph{sum of local terms (SLT) operators} or \emph{local Hamiltonians} via
\begin{equation}
    \label{eq:intSLT}
    H^{\Lambda'} := \sum_{X \subset \Lambda'} \Phi(X) \in \Alg_\Lambda
    \quadtext{for every} \Lambda' \subset \Lambda
    .
\end{equation}
We will often use the shorthand notation \(H = H^\Lambda\) and call \(H\) and all \(H^{\Lambda'}\) of the form~\eqref{eq:intSLT} SLT operators.
Note that, while every interaction defines a unique operator, there are multiple interactions realizing the same operator, i.e.~the assignment \(\Phi \mapsto H(\Phi)\) is not invertible.
Note that, by definition, all interactions and SLT operators are always self-adjoint.

For every \(b > 0\) and \(p \in \intervaloc{0,1}\), we consider the stretched exponential function
\begin{equation}
    \label{eq:stretched-exponential}
    \chi_{b,p}\colon \intervalco{0,\infty} \to \intervaloc{0,1} ,
    \ x \mapsto \E^{-bx^p}
\end{equation}
as \emph{localization functions} and define the associated \emph{SLT interaction norm}
\begin{equation}
    \label{eq:interaction-norm}
    \norm{\Phi}_{b,p}
    :=
    \sup_{z \in \Lambda} \sumstack{Z\subset \Lambda:\\z\in Z}
    \frac{\norm{\Phi(Z)}}{\chi_{b,p}\pdiam{Z}}
\end{equation}
measuring the quality of localization for an interaction \(\Phi\).
Since we consider only finite graphs, \(\norm{\Phi}_{b,p} < \infty\).
Crucially, our results do not depend on the interaction~\(\Phi\), but only on \(\norm{\Phi}_{b,p}\).
Operators with this property will often be referred to as \emph{\((b,p)\)-localized SLT operators}.

\subsection{Gibbs states}
For any Hilbert space \(\HSpace\) and self-adjoint operator \(H \in \calL(\HSpace)\), we denote the Gibbs state at inverse temperature \(\beta \in \intervaloo{0,\infty}\) by
\begin{equation}
    \label{eq:Gibbsstategeneral}
    \rho_\beta^{\HSpace}[H]
    :=
    \frac{\E^{-\beta H}}{\tr \E^{-\beta H}}
    ,
\end{equation}
where \(\trace{\cdot}\) is the full (non-normalized) trace on \(\HSpace\).
For a system on \(\Lambda \in \mathcal{G}(D, \Cvol)\) and \(\Lambda' \subset \Lambda\), we abbreviate
\begin{equation*}
    \rho_\beta^{\Lambda'}[K]
    :=
    \rho_\beta^{\HSpace_{\Lambda'}}[K]
    \quadtext{for}
    K \in \Alg_{\Lambda'}
    \quadtext{and}
    \rho_\beta^{\Lambda'}
    :=
    \rho_\beta^{\Lambda'}[H^{\Lambda'}]
    ,
\end{equation*}
whenever it is clear, which operator \(H^{\Lambda'}\) is meant.
Moreover, if \(\Lambda' = \Lambda\), we will often drop the \(\Lambda\) superscript entirely, i.e.~write \(\rho_\beta := \rho_\beta^\Lambda\).
For \(\beta=\infty\), we denote with \(\rho_\infty^{\HSpace}(H)\) the normalized ground state projection of~\(H\) and likewise use the above abbreviations.

\section{Response theory at low temperature}

Our main result, Theorem~\ref{thm:main}, will be derived as a special case from a much more general result, Theorem~\ref{thm:response-theory-all-orders} below, which extends zero temperature response theory to low temperature Gibbs states.
It relies on two main assumptions: A uniform gapped ground state (Assumption~\nameref{ass:gapGS}) and the convergence of Gibbs states as \(\beta \to \infty\) (Assumption~\nameref{ass:conv}).
In Theorem~\ref{thm:Yarconv}, these assumptions will then be shown to be satisfied for the systems considered in Theorem~\ref{thm:main}.

We now formulate our two main assumptions.

\begin{assumption}{(UGGS)}{Unique (uniform) gapped ground state.}
    \label{ass:gapGS}
    We say that the SLT Hamiltonian \(H\) has a unique uniformly gapped ground state with gap size \(g > 0\) and minimal system size \(R_0\) (both independent of \(\Lambda\)) if and only if the following holds: For all connected \(\Lambda' \subset \Lambda\), such that there exists \(x \in \Lambda'\) with \(B_{R_0}(x) \subset \Lambda'\), we have that \(E_0^{\Lambda'} := \inf \operatorname{spec}(H^{\Lambda'})\) is a non-degenerate eigenvalue of \(H^{\Lambda'}\) and \(
        \Distp[\big]{E_0^{\Lambda'}, \operatorname{spec}(H^{\Lambda'}) \setminus \List{E_0^{\Lambda'}}}
        \ge
        g
    \).
    If this holds only for \(\Lambda' = \Lambda\), we say that \(H\) has a unique gapped ground state with gap size \(g > 0\).
\end{assumption}

Assumption~\nameref{ass:gapGS} is standard in works on response theory for the ground state, see e.g.~\cite{BRF2018, MT2019, Teufel2020, HT2020, HW2024}.

\begin{assumption}{(CONV)}{Convergence of Gibbs states as \(\beta \to \infty\).}
    \label{ass:conv}
    We say that an SLT Hamiltonian \(H\) with a unique ground state \(\rho_\infty\) has a \emph{converging low-temperature Gibbs state} with respect to two monotonically increasing/decreasing functions \(\fCONV\),~\(\zeCONV \colon \intervalco{0,\infty} \to \intervalco{0,\infty}\), respectively, such that \(\lim_{\beta \to \infty}\zeCONV(\beta) = 0\) and a threshold \(\beta_0 \in \intervalco{0,\infty}\), if and only if
    \begin{equation}
        \label{eq:conv}
        \abs[\big]{\trace[\big]{{\rho_{\beta}} \, A} - \trace[\big]{{\rho_{\infty}} \, A}}
        \le
        \norm{A} \, \fCONV(\abs{X}) \, \zeCONV(\beta)
    \end{equation}
    for all connected \(X \subset \Lambda\), \(A \in \Alg_X\) and \(\beta > \beta_0\).
\end{assumption}

In a nutshell, Assumption~\nameref{ass:conv} says that, when tested on local observables, the Gibbs state converges to the ground state \emph{uniformly} in the system size.
Moreover, by taking the maximum over \(A \in \Alg_X\) with \(\norm{A} = 1\) on the left-hand side of~\eqref{eq:conv}, it expresses a convergence in a local trace norm.
We stress that the uniformity in system size is crucial: In fact, when allowing a \(\abs{\Lambda}\)-dependent prefactor \(\fCONV\) in~\eqref{eq:conv} and assuming that the ground state of \(H\) is separated by a gap of size \(g\) from the rest of the spectrum, such a bound is certainly always true with an exponential rate \(\zeCONV(\beta) = \E^{-g \beta}\); see~Lemma~\ref{lem:betaconv}.

Moreover, we can easily obtain bounds on the best possible forms of the two functions \(\fCONV\) and \(\zeCONV\) by considering the following non-interacting example.
The simplest non-trivial Hamiltonian having a uniformly gapped ground state with gap size \(g> 0\), namely modelling non-interacting spin-\(1/2\) particles in an external magnetic field, is
\begin{equation*}
    H = \frac{g}{2} \sum_{x \in \Lambda} \sigma^3_x
    ,
\end{equation*}
where \(\sigma_x^3\) is the third Pauli matrix acting only on \(x \in \Lambda\).
For this concrete Hamiltonian, it is straightforward to see that (cf.~also~\eqref{eq:cledosest1} and the proof of Lemma~\ref{lem:betaconv})
\begin{align*}
    \max_{\substack{A \in \Alg_X : \\ \norm{A} = 1}} \abs[\big]{\trace[\big]{{\rho_{\beta}} \, A} - \trace[\big]{{\rho_{\infty}} \, A}}
    &=
    2 \, \paren[\bigg]{1 - \frac{1}{\sum_{j=0}^{\abs{X}} \binom{\abs{X}}{j} \, \E^{-\beta g j}}}
    \\&=
    2 \, \paren[\Big]{1 - \paren[\big]{1 + \E^{-\beta g}}^{- \abs{X}}}
    \\&\in
    \intervalcc[\Big]{
        c \, \min\List[\big]{1, \abs{X} \, \E^{-\beta g}}
        ,
        C \, \abs{X} \, \E^{-\beta g}
    }
\end{align*}
for some constants \(C\),~\(c > 0\).
Hence, in~\eqref{eq:conv}, \(\fCONV(\abs{X})\) must necessarily grow at least linearly, and \(\zeCONV(\beta)\) can decay at most exponentially fast with rate given by the gap size~\(g\).

Assuming \nameref{ass:gapGS} and \nameref{ass:conv}, we have response theory for the low temperature Gibbs state to any order, as the following theorem states.

\begin{theorem}[Response theory to all orders]
    \label{thm:response-theory-all-orders}
    Fix \(n\),~\(m\),~\(D\),~\(q\in\N\), \(0<p'<p\le1\), \(0<b'<1\) and \(b > b'\).
    Fix also \(\Cvol\),~\(\Cint\),~\(g\),~\(\Cswitch>0\).
    Then there exist constants \(C\),~\(\vLR > 0\), independent of \(\Lambda\), such that the following holds for every \(\Lambda\in \mathcal{G}(D,\Cvol)\) and all SLT operators \(H_0\),~\(V\) with interactions satisfying \(\norm{\Phi_{H_0}}_{b,p}\),~\(\norm{\Phi_V}_{b,p}\le\Cint\).

    Assume that \(H_0\) has a unique gapped ground state with gap size \(g > 0\) and satisfies Assumption~\nameref{ass:conv} w.r.t.~the functions \(\fCONV\),~\(\zeCONV\) and threshold \(\beta_0 \in \intervalco{0,\infty}\).
    Take any inverse temperature \(\beta > \beta_0\).
    Let \(X\subset \Lambda\) be connected and take \(A\in \alg_X\).
    Then there exist response coefficients \(\sigma_{\beta,j}^A\) in the following sense:
    For \(\varepsilon\in\intervaloc{0,1}\) and a smooth switching function \(f\colon \R \to \intervalcc{0,1}\) satisfying \(f(t) = 0\) for \(t \le -1\), \(f(t) = 1\) for \(t \ge 0\), and \(\norm{f}_{C^{\mathfrak{C}_{n,m}}(\R)}<\Cswitch\), where
    \(
        \mathfrak{C}_{n,m}:=\ceil{m(n+1+2D/p)}
        ,
    \)
    consider the time-dependent Hamiltonian
    \begin{equation*}
        H_\epsi(t) := H_0 + \epsi \, f(t) \, V
        .
    \end{equation*}
    Let \(\rho_\beta^{\epsi, \eta, f}(t)\) be the solution of the time-dependent adiabatic Schrödinger equation
    \begin{equation*}
        \I \, \eta \, \frac{\D }{\D t} \rho^{\epsi, \eta, f}_\beta(t)
        =
        \commutator[\big]{H_\epsi(t) , \rho_\beta^{\epsi, \eta, f}(t)}
    \end{equation*}
    with adiabatic parameter \(\eta \in \intervaloc{0,1}\) and initial datum \(\rho_\beta^{\epsi, \eta, f}(t) = \rho_\beta\) for all \(t \le -1\), where \(\rho_{\beta}\) is the Gibbs state of \(H_0\) at inverse temperature~\(\beta\).

    Then, the response to the perturbation,
    \(
        \Sigma_{\beta , A}^{\epsi, \eta, f}(t)
        :=
        \trace[\big]{A \, \rho_\beta^{\epsi, \eta, f}(t)}
        - \trace[\big]{A \, \rho_\beta}
    \),
    satisfies
    \begin{equation}
        \label{eq:linearresponse}
        \begin{aligned}
            \Alignindent
            \abs[\bigg]{\Sigma_{\beta, A}^{\epsi, \eta, f}(t) - \sum_{j=1}^{n} \epsi^j \, \sigma_{\beta,j}^A}
            \\&\le
            \begin{aligned}[t]
                &C \, \norm{A} \, \abs{X}^{(3+n)D} \, \paren[\big]{1 + t}^{2D/p + 1} \, \epsi^{n+1}
                \\&+ C \, \norm{A} \, \abs{X}^{n} \, \inf_{r>0} \paren[\Big]{
                    \E^{\vLR(1+t)/\eta} \, \chi_{b',p'}(r)
                    + \fCONV\paren[\big]{\abs{X} \, (1 + \Cvol r^D)} \, \zeCONV(\beta)
                }
                ,
            \end{aligned}
        \end{aligned}
    \end{equation}
    for all \(t\geq 0\), uniformly for \(\eta \in \intervalcc[\big]{\epsi^m , \epsi^{1/m}}\).
    The first order coefficient is given by \emph{Kubo's formula}
    \begin{equation}
        \label{eq:Kubo}
        \sigma_{\beta,1}^A
        =
        - \I \, \trace[\big]{\commutator[\big]{\calI_{H_0,g}(V), A} \, \rho_\beta}
        ,
    \end{equation}
    where \(\calI_{H_0,g}(\interpunct)\) is the quasi-local inverse of the Liouvillian with suitable filter function introduced in Remark~\ref{rmk:explicit-response-coefficients} below.
\end{theorem}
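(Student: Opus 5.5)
Throughout, \(\rho_\infty^{\epsi,\eta,f}(t)\) denotes the solution of the same adiabatic Schrödinger equation with initial datum \(\rho_\infty\) for \(t \le -1\). The plan is to reduce the positive temperature response to the zero temperature response by comparing the evolved Gibbs state with the evolved ground state. We split
\begin{equation*}
    \Sigma_{\beta,A}^{\epsi,\eta,f}(t) - \sum_{j=1}^n \epsi^j \sigma_{\beta,j}^A
    =
    \paren[\Big]{\Sigma_{\infty,A}^{\epsi,\eta,f}(t) - \sum_{j=1}^n \epsi^j \sigma_{\infty,j}^A}
    + \mathcal{R}_1 + \mathcal{R}_2
    .
\end{equation*}
Here \(\Sigma_{\infty,A}^{\epsi,\eta,f}(t) := \trace{A\,\rho_\infty^{\epsi,\eta,f}(t)} - \trace{A\,\rho_\infty}\), and \(\sigma_{\infty,j}^A\) are the ground state response coefficients. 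The remainder \(\mathcal{R}_1\) is the difference of the two responses, which we write in the Heisenberg picture as \(\trace{(\rho_\beta - \rho_\infty)(\tau(A) - A)}\). The remainder \(\mathcal{R}_2 = \sum_j \epsi^j(\sigma_{\infty,j}^A - \sigma_{\beta,j}^A)\) is the difference of the coefficients.

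For the first term we invoke the known zero temperature adiabatic theorem, e.g.\ \cite[Theorem~4.1]{Teufel2020} or \cite{MT2019,HT2020bulk}. This requires only the unique gapped ground state of \(H_0\), and it yields the first line of \eqref{eq:linearresponse} with \(\abs{X}^{(3+n)D}(1+t)^{2D/p+1}\epsi^{n+1}\), uniformly for \(\eta\in[\epsi^m,\epsi^{1/m}]\). In that theory the coefficients are expectations in \(\rho_\infty\) of explicit observables. Each has the form \(\trace{\rho_\infty B_j}\) with \(B_j\) built from nested commutators of \(A\) with \(\calI_{H_0,g}\)-images of \(V\) and its derivatives. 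We then define \(\sigma_{\beta,j}^A := \trace{\rho_\beta B_j}\), which gives Kubo's formula \eqref{eq:Kubo} for \(j=1\).

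For \(\mathcal{R}_1\) we write \(\tau(A)\) for the Heisenberg evolution of \(A\) from time \(-1\) to \(t\) under \(H_\epsi(\cdot)/\eta\). The Lieb–Robinson bound for a \((b,p)\)-localized time-dependent SLT generator gives a strictly local approximation \(\tau_r(A)\in\Alg_{X_r}\), where \(X_r\) is the \(r\)-fattening of \(X\) with \(\abs{X_r}\le\abs{X}(1+\Cvol r^D)\). The error is \(\norm{\tau(A)-\tau_r(A)}\le C\norm{A}\abs{X}\E^{\vLR(1+t)/\eta}\chi_{b',p'}(r)\). Applying Assumption~\nameref{ass:conv} to \(\tau_r(A)-A\in\Alg_{X_r}\) then gives the bound \(\norm{A}\fCONV(\abs{X_r})\zeCONV(\beta)\). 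One technical point is that \(X_r\) must be connected, which holds for the fattening of a connected set.

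For \(\mathcal{R}_2\) we handle each \(B_j\) in the same way. By quasi-locality of \(\calI_{H_0,g}\) (its filter function decays faster than \(\chi_{b',p'}\)), \(B_j\) is approximated by an observable on \(X_r\) with error \(\abs{X}^{j}\chi_{b',p'}(r)\). Applying \nameref{ass:conv} again gives \(\abs{X}^{j}(\chi_{b',p'}(r)+\fCONV(\abs{X_r})\zeCONV(\beta))\), and summing the powers of \(\epsi\le1\) yields the prefactor \(\abs{X}^n\). Optimizing over \(r\) finishes the proof. The main obstacle is the bookkeeping for the locality of the higher order coefficients \(B_j\): iterated \(\calI\)-maps lose localization, which forces \(p'<p\) and \(b'<b\), and their norms must be tracked polynomially in \(\abs{X}\). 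The deterioration of the Lieb–Robinson estimate over the adiabatic time \((1+t)/\eta\) is simply absorbed into the factor \(\E^{\vLR(1+t)/\eta}\).
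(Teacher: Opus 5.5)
Your proposal is correct and follows essentially the paper's route. The paper also sets \(\sigma_{\beta,j}^A := \trace{\mathcal K_j(A)\,\rho_\beta}\), with the state-independent maps \(\mathcal K_j\) taken from the ground-state NEASS result (Theorem~\ref{thm:linear-response-ground-states}, adapted from HW2024 rather than Teufel2020; this is where the precise first-line error comes from), and then bounds \(\trace{F\,(\rho_\beta-\rho_\infty)}\) with \(F=\tau(A)-A-\sum_j\epsi^j\mathcal K_j(A)\) by splitting \(F=\Pi_{X_r}F+(1-\Pi_{X_r})F\), using \nameref{ass:conv} on the local part and Lieb--Robinson bounds plus the localization of \(\mathcal K_j(A)\) on the tail; your \(\mathcal R_1+\mathcal R_2\) is this estimate with \(F\) cut into two pieces, and the localization of \(\mathcal K_j(A)\) that you flag as the main obstacle is obtained in the paper by combining the SLT structure of the NEASS generators \(A_\mu\) with a multicommutator locality bound (Lemma~\ref{lem:multi-commutator-with-two-observables}) and the properties of the conditional expectation.
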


\begin{rmk}[Explicit response coefficients]
    \label{rmk:explicit-response-coefficients}
    The response coefficients from Theorem~\ref{thm:response-theory-all-orders} can be explicitly calculated from the initial state.
    The key object \(\calI_{H_0,g}(\interpunct)\) is the inverse Liouvillian
    \begin{equation*}
        \calI_{H,g}(O)
        :=
        \int_\R \D t \, w_g(t) \int_0^t \D s \, \e^{\I Hs}
        \, O \, \e^{-\I Hs}
        ,
    \end{equation*}
    defined via a \emph{filter function} \(w_g \in L^1(\R)\) that is positive, normalized to \(\int w_g = 1\), and has Fourier transform%
    \footnote{%
        We use the convention that
        \(\widehat{f}(p) := (2 \pi)^{-1/2} \int_\R \D x \, \E^{- \I p x} f(x)\)
        for the Fourier transform.
    }
    \(\widehat{w_g}\) with support
    \begin{equation*}
        \support(\widehat{w_g}) \subset \intervalcc{-g,g}
        ,
    \end{equation*}
    and for all \(p<1\) decays as
    \begin{equation}
        \label{eq:wbound}
        \abs{w_g(t)} \le C \, \e^{-\abs{t}^p}
        .
    \end{equation}
    A function satisfying these requirements can be explicitly constructed
    (see e.g.~\cite{HW2005,BMNS2012}) and is discussed in~\cite[Appendix~A.3]{HW2024} in more detail.
    The higher orders are constructed iteratively; see~\cite[Section~6.2 and Proposition~4.5]{HW2024}.
\end{rmk}

The proof of Theorem~\ref{thm:response-theory-all-orders} relies on the method of constructing \emph{non-equilibrium almost stationary states} (NEASSs) (see e.g.~\cite{MT2019, Teufel2020, HT2020bulk, HT2020unif, HW2022, HW2024, Wesle2025, Becker2025}) and is given in Section~\ref{sec:NEASS}.

\section{Proof of Theorem~\ref{thm:main} via convergence of Gibbs states}

The following theorem shows that the systems considered in Theorem~\ref{thm:main} in fact satisfy Assumptions~\nameref{ass:gapGS} and \nameref{ass:conv}.

\begin{theorem}[Rapid ground state convergence]
    \label{thm:Yarconv}
    Fix \(D\),~\(q\),~\(R \in \N\) and \(\Cint > 0\).
    Then there exist \(a\in \intervaloo{0,1}\) and \(C_1\),~\(C_2\),~\(C_3 > 0\) such that the following holds.
    Consider a lattice~\(\Lambda \Subset \Z^D\) and a Hamiltonian~\(H_0\) as defined in~\eqref{eq:H0def} with \(\norm{h}_\infty\), \(\norm{j}_\infty \le \Cint\), \(j_x\) of range~\(R\), and \(j_x\) relatively \(a\)-bounded w.r.t.~the~\(h_x\)'s in the sense~\eqref{eq:formbound}.

    Then, \(H_0\) has a unique gapped ground state \(\rho_\infty\) with gap~\(g \ge 1 - a\), and we have that its Gibbs state \(\rho_\beta\) converges exponentially fast to the ground state as \(\beta \to \infty\), i.e.\
    \begin{equation*}
        \abs[\big]{
            \trace{A \, \rho_\beta}
            - \trace{A \, \rho_\infty}
        }
        \le
        C_1 \, \norm{A} \, \exp\paren[\big]{C_2 \, \abs{X}} \, \exp\paren[\big]{- C_3 \, \beta}
    \end{equation*}
    for all connected sets \(X\subset \Lambda\) and observables \(A \in \Alg_X\).
\end{theorem}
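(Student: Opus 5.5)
The proof has two parts. First, a variational argument gives the gap and CLEDOS. Second, local indistinguishability from~\cite{AHLW25} reduces the Gibbs state on \(\Lambda\) to a Gibbs state on a ball of radius \(r\) around \(X\), where the gap and CLEDOS give exponential convergence in \(\beta\); this is the route of Proposition~\ref{prop:niceCONV} in Figure~\ref{fig:overview}.

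\emph{Spectral bounds.} Write \(H_{\mathup{free}}=\sum_x h_x\). Each \(z\) lies in at most \(\kappa_R\) balls \(B_R(x)\), so summing~\eqref{eq:formbound} over \(x\) gives
\[
\abs{\innerp{\psi}{\textstyle\sum_x j_x\,\psi}}\le a\,\innerp{\psi}{H_{\mathup{free}}\,\psi}.
\]
Hence, in the form sense,
\[
(1-a)\,H_{\mathup{free}}\le H_0\le(1+a)\,H_{\mathup{free}}.
\]
The product vector \(\Omega=\bigotimes_x\Omega_x\) satisfies \(H_{\mathup{free}}\Omega=0\). Since \(j_x\) is form bounded by \(H_{\mathup{free}}\) and \(H_0\ge0\), we get \(H_0\Omega=0\), and \(\Omega\) is the unique zero mode. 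Any \(\psi\perp\Omega\) has \(\innerp{\psi}{H_{\mathup{free}}\,\psi}\ge\norm{\psi}^2\), because the gap of each \(h_x\) is at least one. The min--max principle then gives spectral gap \(g\ge1-a\).

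The same argument applies to every restriction \(H_0^{\Lambda'}\), so we actually get UGGS. Min--max applied to the two-sided form bound also yields CLEDOS. Let \(N_{\Lambda'}(E)\) be the number of eigenvalues of \(H_0^{\Lambda'}\) below \(E\). Then \(N_{\Lambda'}(E)\) is at most the number of eigenvalues of \(H_{\mathup{free}}^{\Lambda'}\) below \(E/(1-a)\). For the free operator, an eigenvalue below \(E'\) has at most \(\lfloor E'\rfloor\) excited sites, so this number is at most \(\sum_{k\le E'}\binom{\abs{\Lambda'}}{k}(q-1)^{k}\cdot(\text{multiplicity})\). Summing,
\[
\trace[\big]{\E^{-\beta H^{\Lambda'}_0}}-1\le\paren[\big]{1+(q-1)\E^{-\beta(1-a)}}^{\abs{\Lambda'}}-1 .
\]
This is the analogue of the computation preceding Theorem~\ref{thm:response-theory-all-orders}.

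\emph{Local convergence on a ball.} For \(\Lambda'\supset X\), the standard spectral argument (Lemma~\ref{lem:betaconv}) gives
\[
\norm{\rho_\beta^{\Lambda'}-\rho_\infty^{\Lambda'}}_1\le 2\paren[\big]{1-(\tr\E^{-\beta(H^{\Lambda'}-E_0)})^{-1}}\le C\,\abs{\Lambda'}\,q\,\E^{-\beta(1-a)},
\]
where the last step uses the bound above once \(\abs{\Lambda'}\E^{-\beta(1-a)}\lesssim1\). Take \(\Lambda'=X_r:=\Set{y\given d(y,X)\le r}\).

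\emph{Transferring to \(\Lambda\).} At \(\beta=\infty\), the ground state of every restriction is the same product state \(\Omega\). Consequently \(\trace{A\,\rho^{\Lambda'}_\infty}=\trace{A\,\rho_\infty}\) exactly, and no ground-state locality estimate is needed. At finite \(\beta\) we invoke local indistinguishability from~\cite{AHLW25}, uniformly in \(\beta\):
\[
\abs{\trace{A\rho_\beta^\Lambda}-\trace{A\rho_\beta^{X_r}}}\le C\norm{A}\abs{X}\E^{-c\,r}.
\]
If~\cite{AHLW25} only provides a stretched exponential, the choice of \(r\) below changes, but the structure does not.

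\emph{Optimising over \(r\).} Combining the three pieces,
\[
\abs{\trace{A(\rho_\beta-\rho_\infty)}}\le C\norm{A}\paren[\big]{\abs{X}\E^{-cr}+\abs{X}(1+\Cvol r^D)\E^{-\beta(1-a)}}.
\]
Choose \(r\) proportional to \(\beta\), say \(r=\beta(1-a)/2c'\). The polynomial factor \(r^D\) is absorbed into the exponential, and \(\abs{X}\le \E^{\abs{X}}\). This produces the claimed bound \(C_1\E^{C_2\abs{X}}\E^{-C_3\beta}\) for \(\beta\ge\beta_0\). For small \(\beta\), the claim holds trivially by enlarging \(C_1\).

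The main obstacle is the LI input. We must verify that the LI statement of~\cite{AHLW25} holds uniformly in \(\beta\), and with a rate strong enough that \(r\sim\beta\) still gives exponential decay in \(\beta\). The small constant \(a\) has to be chosen so that the cluster-expansion hypotheses of that work are met. Everything else follows from the variational principle.
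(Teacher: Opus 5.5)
Your proposal is correct and follows essentially the paper's route. It uses the triangle inequality through \(X_r\) from Proposition~\ref{prop:niceCONV}(i), local indistinguishability from~\cite{AHLW25}, the form bound with min--max to control the local partition function, and the choice \(r\propto\beta\). Two points differ slightly from the paper:
\begin{itemize}
\item You state the LI prefactor as \(\abs{X}\), but~\cite{AHLW25} gives \(C_1\E^{C_2\abs{X}}\); that is where the \(\E^{C_2\abs{X}}\) in the statement comes from, and your argument goes through unchanged with it.
\item Two of your steps are valid, slightly more elementary shortcuts for this model. You use that the product vector is the exact ground state of every restriction, instead of extending LI to \(\beta=\infty\). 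You bound \(\trace{\E^{-\beta H_0^{\Lambda'}}}\le\trace{\E^{-\beta(1-a)H_{\mathrm{free}}^{\Lambda'}}}\) directly, instead of going through the CLEDOS interval counting.
\end{itemize}
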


The proof of Theorem~\ref{thm:Yarconv} is given in Section~\ref{sec:LI}.
\begin{proof}[Proof of Theorem~\ref{thm:main}]
    This is an immediate consequence of Theorems~\ref{thm:response-theory-all-orders} and~\ref{thm:Yarconv}.
    The minimum over \(r\) is estimated by applying the Young inequality
    \(\abs{X} r^D \le \alpha \abs{X}^{1/\alpha} + (1-\alpha) r^{D/(1-\alpha)}\) and choosing \(r \sim \beta^{(1-\alpha)/D}\).
    Afterwards, we adjust the exponent of \(\abs{X}\) to absorb multiplicative prefactors.
\end{proof}

\section{Proof of Theorem~\ref{thm:Yarconv}}
\label{sec:LI}
The goal of this section is to give the proof of Theorem~\ref{thm:Yarconv}.
In order to do so, in Proposition~\ref{prop:niceCONV} below, we will show that Assumption~\nameref{ass:conv} is a consequence of \emph{local indistinguishability}, see Assumption~\nameref{ass:LI}, and a \emph{controlled low-energy density of states}, see Assumption~\nameref{ass:CLEDOS}.

\subsection{Local indistinguishability and controlled low-energy density of states}
We start by formulating the local indistinguishability assumption precisely.

\begin{assumption}{(LI)}{Local indistinguishability.}
    \label{ass:LI}
    We say that an SLT Hamiltonian \(H\) satisfies \emph{local indistinguishability} (LI) at inverse temperature \(\beta \in \intervaloo{0,\infty}\) on \(\Lambda \in \mathcal{G}(D, \Cvol)\) with minimal system size \(R_0\) (independent of \(\Lambda\)) and with respect to two monotonically increasing/decreasing functions \(\fLI\),~\(\zeLI \colon \intervalco{0,\infty} \to \intervalco{0,\infty}\), respectively, with \(\zeLI(r) \to 0\) as \(r \to \infty\), if and only if the following holds: For all connected \(\Lambda' \subset \Lambda\) such that there exists \(x \in \Lambda'\) with \(B_{R_0}(x) \subset \Lambda'\), we have that
    \begin{equation}
        \label{eq:LI}
        \abs[\big]{\trace[\big]{\rho_\beta^{\Lambda} \, A} - \trace[\big]{\rho_\beta^{\Lambda'} \, A}}
        \le
        \norm{A} \, \fLI(\abs{X}) \, \zeLI\pdist{X, \Lambda \setminus \Lambda'}
    \end{equation}
    for all connected \(X \subset \Lambda'\) and \(A \in \Alg_X\).
    Moreover, if there exists \(\beta_0 \in \intervalco{0,\infty}\) such that \(H\) satisfies LI for all \(\beta > \beta_0\) with the same functions \(\fLI\),~\(\zeLI\), we say that \(H\) satisfies \emph{LI uniformly in small temperatures} w.r.t.~\(\fLI\),~\(\zeLI\) and threshold \(\beta_0\).
\end{assumption}

Next, we formulate the assumption of a controlled low-energy density of states precisely: To do so, for a fixed constant \(\delta > 0\), we denote the intervals
\begin{equation*}
    I_k
    =
    I_k(\delta)
    :=
    \intervalco{k\delta, (k+1)\delta}
    \quadtext{for}
    k \in \N_0
    .
\end{equation*}

\begin{assumption}{(CLEDOS)}{Controlled low-energy density of states.}
    \label{ass:CLEDOS}
    Let \(H\) be an SLT Hamiltonian on \(\Lambda \in \mathcal{G}(D, \Cvol)\) with ground state eigenvalue zero, i.e.~\(\inf \operatorname{spec}(H) = 0\).
    We say that \(H\) has a uniform \emph{controlled low-energy density of states} (CLEDOS) with minimal system size \(R_0\), w.r.t.~\(\delta = \delta_{\mathrm{cledos}} > 0\) and \(m = m_{\mathrm{cledos}} \in \N\) if and only if the following holds: For all connected \(\Lambda' \subset \Lambda\) such that there exists \(x \in \Lambda'\) with \(B_{R_0}(x) \subset \Lambda'\), we have that
    \begin{equation}
        \label{eq:cledos}
        \inf \operatorname{spec}\paren[\big]{H^{\Lambda'}}
        =
        0
        \quadtext{and}
        \trace[\big]{\chi_{I_k(\delta)}(H^{\Lambda'})}
        \le
        1 + k^m \abs{\Lambda'}^{mk}
        \quadtext{for all}
        k \in \N_0
        ,
    \end{equation}
    where \(\chi_I\) denotes the characteristic function of an interval \(I \subset \R\).
\end{assumption}

Note that~\eqref{eq:cledos} for \(k=0\) implies that, in particular, zero is a non-degenerate ground state eigenvalue of \(H^{\Lambda'}\), which is separated by a gap of size at least \(\delta\).
More precisely, \nameref{ass:CLEDOS} implies \nameref{ass:gapGS} with \(g = \delta\).
For general \(k\), Assumption~\nameref{ass:CLEDOS} states that the number of eigenvalues in the interval \(I_k(\delta)\) can grow at most polynomially in the system size \(\abs{\Lambda'}\).

Further, we point out that, upon adjusting \(m \to 3m\), the bound~\eqref{eq:cledos} is equivalent to controlling
\begin{equation}
    \label{eq:Itofull}
    \trace[\big]{\chi_{\intervalco{0,k\delta}}(H^{\Lambda'})}
    =
    \sum_{\ell = 0}^{k-1} \trace[\big]{\chi_{I_\ell(\delta)}(H^{\Lambda'})}
    .
\end{equation}

\begin{rmk}[On Assumption~\nameref{ass:CLEDOS}]
    \label{rmk:cledos}
    As we exploit in the proof of Theorem~\ref{thm:Yarconv}, Assumption~\nameref{ass:CLEDOS} holds true for non-interacting Hamiltonians with a uniform on-site gap.
    Additionally, as employed in the proof of Theorem~\ref{thm:Yarconv} as well, relatively form-bounded perturbations of such non-interacting on-site gapped Hamiltonians (with a form bound constant smaller than the minimal on-site gap size) preserve Assumption~\nameref{ass:CLEDOS} upon adjusting the parameters \(\delta\) and~\(m\).
    More generally, we expect Assumption~\nameref{ass:CLEDOS} to be generically true for weak perturbations of on-site gapped Hamiltonians, i.e.~not necessarily satisfying the form bound.
    In fact, the essence of Assumption~\nameref{ass:CLEDOS} is that \(H\) has a unique gapped ground state and at low excited energies, the many-body density of states (mbDoS) is at most polynomially large in the system size.
\end{rmk}

We can now formulate the main result of this section, stating that, under additional spectral conditions, \nameref{ass:LI} implies \nameref{ass:conv}.
The proof is given in Section~\ref{sec:proof-LItoCONV}.

\begin{prop}[\nameref{ass:LI} implies \nameref{ass:conv}]
    \label{prop:niceCONV}
    Let \(H\) be an SLT Hamiltonian that satisfies Assumption~\nameref{ass:LI} with minimal system size \(R_0\), uniformly in small temperatures w.r.t.~\(\fLI\),~\(\zeLI\) and threshold \(\beta_0 \in \intervalco{0,\infty}\).
    Suppose, additionally, that \(H\) has a unique ground state.
    Then we have the following:
    \begin{itemize}
        \item[(i)] \textnormal{[Using \nameref{ass:CLEDOS}]} Assume additionally that \(H\) satisfies Assumption~\nameref{ass:CLEDOS} w.r.t.~\(\delta = \delta_{\mathrm{cledos}}\), \(m = m_{\mathrm{cledos}}\) and the same minimal size~\(R_0\). Then \(H\) has a converging low-temperature Gibbs state w.r.t.~the threshold \(\beta_0\) and the functions
            \begin{align*}
                \fCONV(\abs{X})
                &=
                2 \, \paren[\big]{\fLI(\abs{X}) + C_m \, (1+\abs{X}^m)}
            \shortintertext{and}
                \zeCONV(\beta)
                &=
                \inf_{r \in \intervaloo{R_0,\infty}} \paren[\Big]{\zeLI(r) + C_m \, \paren[\big]{1 + \paren{1 + \Cvol \, r^D}^m} \, \E^{-\delta \beta}}
                ,
            \end{align*}
            where \(C_m\) is a constant depending only on \(m\).
        \item[(ii)] \textnormal{[Using \nameref{ass:gapGS}]} Assume additionally that \(H\) satisfies Assumption~\nameref{ass:gapGS} with parameters \(g\) and \(R_0\). Then \(H\) has a converging low-temperature Gibbs state w.r.t.~the threshold \(\beta_0\) and the functions
            \begin{align*}
                \fCONV(\abs{X})
                &=
                2 \, \paren[\big]{\fLI(\abs{X}) + \E^{\log(q) \alpha \abs{X}^{1/\alpha}}}
            \shortintertext{and}
                \zeCONV(\beta)
                &=
                \inf_{r \in \intervaloo{R_0,\infty}} \paren[\bigg]{\zeLI(r) + \E^{\log(q) (1-\alpha) \paren[\big]{1 + \Cvol r^D}^{\frac{1}{1 - \alpha}}} \, \E^{-g \beta}}
            \end{align*}
            for any \(\alpha \in \intervaloo{0,1}\).
    \end{itemize}
\end{prop}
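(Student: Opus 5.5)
The plan is a three-term triangle inequality through a local region. Fix a connected \(X\subset\Lambda\), \(A\in\Alg_X\), \(\beta>\beta_0\) and \(r>R_0\). Set \(\Lambda' := \Set{y\in\Lambda \given \dist{y,X}\le r}\). This set is connected because \(X\) is connected and the graph distance is realised by paths. It contains \(B_{R_0}(x)\) for any \(x\in X\), and it satisfies \(\dist{X,\Lambda\setminus\Lambda'}>r\) and \(\abs{\Lambda'}\le \abs{X}(1+\Cvol r^D)\) by~\eqref{eq:dimension}. We then split
\begin{equation*}
    \abs{\trace{\rho_\beta A}-\trace{\rho_\infty A}}
    \le
    \abs{\trace{(\rho_\beta-\rho_\beta^{\Lambda'})A}}
    + \abs{\trace{(\rho_\beta^{\Lambda'}-\rho_\infty^{\Lambda'})A}}
    + \abs{\trace{(\rho_\infty^{\Lambda'}-\rho_\infty)A}}
    .
\end{equation*}
The first term is at most \(\norm{A}\fLI(\abs{X})\zeLI(r)\) directly by \nameref{ass:LI}.

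For the third term, the idea is to pass LI to the limit \(\beta\to\infty\). Both Hamiltonians \(H\) and \(H^{\Lambda'}\) have unique ground states: \(H\) by hypothesis, and \(H^{\Lambda'}\) by \nameref{ass:CLEDOS} in case (i) or \nameref{ass:gapGS} in case (ii). In finite volume we have \(\rho_\beta^{\Lambda}\to\rho_\infty^{\Lambda}\) and \(\rho_\beta^{\Lambda'}\to\rho_\infty^{\Lambda'}\) in norm, for instance by Lemma~\ref{lem:betaconv} or by the spectral theorem. Since the LI bound is uniform in \(\beta>\beta_0\), letting \(\beta\to\infty\) gives the same bound \(\norm{A}\fLI(\abs{X})\zeLI(r)\). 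Together with the first term, this accounts for the factor \(2\fLI\) in \(\fCONV\) and for \(\zeLI(r)\) in \(\zeCONV\).

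The middle term is a finite-volume estimate, and it is the main step. After shifting so that \(E_0^{\Lambda'}=0\), we write \(\rho_\beta^{\Lambda'}-P_0 = \sum_{j}\frac{\E^{-\beta E_j}}{Z}P_j - (1-\frac1Z)P_0\), with \(Z=\trace{\E^{-\beta H^{\Lambda'}}}\). This gives \(\norm{\rho_\beta^{\Lambda'}-P_0}_1\le 2(1-1/Z)\le 2(Z-1)\).

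In case (ii), every excited eigenvalue is at least \(g\) and \(\dim\HSpace_{\Lambda'}\le q^{\abs{\Lambda'}}\). Hence \(Z-1\le q^{\abs{X}(1+\Cvol r^D)}\E^{-g\beta}\). Young's inequality \(\abs{X}s\le\alpha\abs{X}^{1/\alpha}+(1-\alpha)s^{1/(1-\alpha)}\), with \(s=1+\Cvol r^D\), factorises this bound. The \(X\)-dependent factor \(\E^{\log(q)\alpha\abs{X}^{1/\alpha}}\) goes into \(\fCONV\), and the \(r\)-dependent factor into \(\zeCONV\).

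In case (i), \nameref{ass:CLEDOS} gives
\begin{equation*}
    Z-1\le \sum_{k\ge1}k^m\abs{\Lambda'}^{mk}\E^{-\delta k\beta}.
\end{equation*}
The obstacle is to make this polynomial in \(\abs{\Lambda'}\) rather than exponential. For \(\beta\) large enough that \(\abs{\Lambda'}^m\E^{-\delta\beta}\le 1/2\), the sum is dominated by its \(k=1\) term, giving \(\le C_m\abs{\Lambda'}^m\E^{-\delta\beta}\). For smaller \(\beta\) we instead use the trivial bound \(2\), and this is again at most \(C_m\abs{\Lambda'}^m\E^{-\delta\beta}\) after enlarging \(C_m\). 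We then bound \(\abs{\Lambda'}^m\le \abs{X}^m(1+\Cvol r^D)^m\). Inserting this product form gives a term of the form \(C_m(1+\abs{X}^m)\cdot C_m(1+(1+\Cvol r^D)^m)\E^{-\delta\beta}\), which is consistent with the stated \(\fCONV\) and \(\zeCONV\).

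Finally, we bound each summand by the product \(\fCONV\cdot\) (its piece), using that all factors are at least one where needed, and then take the infimum over \(r\). The main care point is the regime bookkeeping in case (i), so that the constant \(C_m\) depends only on \(m\).
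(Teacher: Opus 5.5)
Your proposal is correct and follows the paper's proof essentially step by step. It uses the same three-term split through the \(r\)-fattening \(X_r\) and passes (LI) to \(\beta=\infty\) via its uniformity in \(\beta\). It then bounds the middle term by \(2(Z-1)/Z\), using either (CLEDOS) (geometric summation for large \(\beta\), trivial bound otherwise) or the dimension count \(q^{\abs{\Lambda'}}\) together with Young's inequality. The only slip is that you dropped the \(1+\) in the (CLEDOS) eigenvalue count when bounding \(Z-1\); this is harmless, since \(1\le k^m\abs{\Lambda'}^{mk}\) for \(k\ge 1\), and it only doubles \(C_m\).
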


\subsection{Rapid ground state convergence via LI: Proof of Theorem~\ref{thm:Yarconv}}
We will show Theorem~\ref{thm:Yarconv} based on Proposition~\ref{prop:niceCONV}.
In order to do so, we first show that \(H_0\) from~\eqref{eq:H0def} satisfies Assumption~\nameref{ass:CLEDOS}.
Let \(\Lambda'\subset \Lambda\) and denote the non-interacting part of~\eqref{eq:H0def} by
\begin{equation*}
    H_{\mathrm{non}}^{\Lambda'}
    =
    \sum_{x \in \Lambda'} h_x
    .
\end{equation*}
Then, using the fact that \(H_{\mathrm{non}}^{\Lambda'}\) is non-interacting and each \(h_x \ge 0\) has a gap of size at least one above its ground state eigenvalue zero, it holds that
\begin{align*}
    \trace[\big]{\chi_{\intervalco{0,k}}(H_{\mathrm{non}}^{\Lambda'})}
    &\le
    \sum_{\ell = 0}^{k-1} \binom{\abs{\Lambda'}}{\ell} \, (q - 1)^\ell
    \\&\le
    \sum_{\ell = 0}^{k-1}\frac{1}{\ell!} \, \abs{\Lambda'}^\ell \, (q - 1)^\ell
    \le
    1 + \unit_{k \ge 2} \, \E^{q-1} \, \abs{\Lambda'}^k
    .
\end{align*}
Taking \(m = \ceil{(q-1)/( \log 2)}\), we then have that \(\unit_{k \ge 2} \, \E^{q-1} \le k^m\).

Next, by the form bound~\eqref{eq:formbound}, we have \((1-a) H_{\mathrm{non}}^{\Lambda'} \leq H_0^{\Lambda'} \leq (1+a) H_{\mathrm{non}}^{\Lambda'}\).
Hence, by the min-max principle, we have that the \(j^{\text{th}}\) eigenvalue \(E_j^{0,\Lambda'}\) of \(H_0^{\Lambda'}\) remains \enquote{of the same order} as the \(j^{\mathrm{th}}\) eigenvalue \(E_j^{\mathrm{non},\Lambda'}\) of \(H_{\mathrm{non}}^{\Lambda'}\), i.e.~\(E_j^{0,\Lambda'} \in \intervalcc{(1 - a)E_j^{\mathrm{non},\Lambda'}, (1 + a) E_j^{\mathrm{non},\Lambda'}}\); cf.~also Remark~\ref{rmk:cledos}.
Hence,
\begin{equation*}
    \trace[\big]{\chi_{\intervalco{0,(1 - a)k}}(H_{0}^{\Lambda'})}
    \le
    \sum_{\ell = 0}^{k-1} \binom{\abs{\Lambda'}}{\ell} \, (q - 1)^\ell
    \le
    1 + \unit_{k \ge 2} \, \E^{q-1} \, \abs{\Lambda'}^k
\end{equation*}
and thus, taking \(m = \ceil{(q-1)/ \log 2}\), we have that \(\unit_{k \ge 2} \, \E^{q-1} \le k^m\).

In particular, involving the argument around~\eqref{eq:Itofull}, \(H_{0}\) satisfies Assumption~\nameref{ass:CLEDOS} w.r.t.~the parameters \(\delta = \tfrac{1-a}{2}\) and \(m = 3 \, \ceil{(q-1)/\log 2}\).

The claim now follows from Proposition~\ref{prop:niceCONV}~(i) by involving the fact that \(H_0\) from~\eqref{eq:H0def} satisfies Assumption~\nameref{ass:LI} at every inverse temperature \(\beta \in \intervaloo{0,\infty}\) with
\begin{equation*}
    \fLI(\abs{X})
    =
    C_1 \, \exp(C_2 \, \abs{X})
    \quadtext{and}
    \zeLI\pdist{X, \Lambda \setminus \Lambda'}
    =
    \exp\paren[\big]{- \dist{X, \Lambda \setminus \Lambda'}/\xi_{\LI}}
\end{equation*}
for some constants \(C_1\),~\(C_2\),~\(\xi_{\LI} > 0\), as proven in~\cite[Theorem~2.3]{AHLW25}.
The exponential decay of \(\zeLI\) allows to choose \(r \sim \beta\) in the infimum of \(\zeCONV\) provided in Proposition~\ref{prop:niceCONV} and obtain the exponential decay~\(\e^{-C_3 \, \beta}\) in the statement.

\subsection{From LI to CONV: Proof of Proposition~\ref{prop:niceCONV}}
\label{sec:proof-LItoCONV}

Let us first notice that \nameref{ass:LI} extends to the ground state under the assumptions of the proposition.
Therefore, notice that \nameref{ass:CLEDOS} at \(k=0\) and \nameref{ass:gapGS} each imply that the ground states \(\rho^{\Lambda'}_\infty\) of \(H^{\Lambda'}\) are non-degenerate for the relevant~\(\Lambda'\) allowed in the assumptions.
Hence, \(\rho^{\Lambda'}_\beta \to \rho^{\Lambda'}_\infty\) in trace norm.
And since the LI bound is uniform in~\(\beta>\beta_0\), it extends to the limit.

Now, let \(X \subset \Lambda\) be connected, \(A \in \Alg_X\) and \(\beta > \beta_0\), so that~\eqref{eq:LI} is available with the functions \(\fLI\),~\(\zeLI\).
We begin with the first statement in Proposition~\ref{prop:niceCONV} and estimate
\begin{equation}
    \label{eq:eps/3}
    \begin{aligned}
        \Alignindent
        \abs[\big]{\trace[\big]{{\rho_{\beta}^\Lambda} \, A} - \trace[\big]{{\rho_{\infty}^\Lambda} \, A}}
        \\&\le
        \abs[\big]{\trace[\big]{{\rho_{\beta}^\Lambda} \, A} - \trace[\big]{{\rho_{\beta}^{\Lambda'}} \, A}}
        + \abs[\big]{\trace[\big]{{\rho_{\beta}^{\Lambda'}} \, A} - \trace[\big]{{\rho_{\infty}^{\Lambda'}} \, A}}
        + \abs[\big]{\trace[\big]{{\rho_{\infty}^{\Lambda'}} \, A} - \trace[\big]{{\rho_{\infty}^\Lambda} \, A}}
    \end{aligned}
\end{equation}
for some connected \(\Lambda'\) satisfying \(X \subset \Lambda' \subset \Lambda\) to be chosen below.
Thus the first and third terms in~\eqref{eq:eps/3} satisfy
\begin{equation}
    \label{eq:LIestimate}
    \abs[\big]{\trace[\big]{{\rho_{\beta}^\Lambda} \, A} - \trace[\big]{{\rho_{\beta}^{\Lambda'}} \, A}} + \abs[\big]{\trace[\big]{{\rho_{\infty}^{\Lambda'}} \, A} - \trace[\big]{{\rho_{\infty}^\Lambda} \, A}}
    \le
    2 \, \norm{A} \, \fLI(\abs{X}) \, \zeLI\pdist{X, \Lambda \setminus \Lambda'}
    .
\end{equation}
For the second term on the right-hand side of~\eqref{eq:eps/3}, we employ Assumption~\nameref{ass:CLEDOS}.
In order to do this, let
\begin{equation*}
    Z_\beta^{\Lambda'}
    :=
    \tr \E^{-\beta H^{\Lambda'}}
\end{equation*}
be the partition function of \(H^{\Lambda'}\) at inverse temperature \(\beta\).
Since zero is a non-degenerate ground state eigenvalue of \(H^{\Lambda'}\), we find
\begin{equation}
    \label{eq:cledosest1}
    \abs[\big]{\trace[\big]{{\rho_{\beta}^{\Lambda'}} \, A} - \trace[\big]{{\rho_{\infty}^{\Lambda'}} \, A}}
    \le
    \norm{A} \, \tr\,\abs[\big]{{\rho_{\beta}^{\Lambda'}} - {\rho_{\infty}^{\Lambda'}}}
    =
    2 \, \norm{A} \, \frac{Z_\beta^{\Lambda'} - 1}{Z_\beta^{\Lambda'}}
    ,
\end{equation}
where we used that
\begin{equation*}
    \tr\,\abs[\big]{{\rho_{\beta}^{\Lambda'}} - {\rho_{\infty}^{\Lambda'}}}
    =
    \paren[\bigg]{1 - \frac{1}{Z_\beta^{\Lambda'}}} + \frac{Z_\beta^{\Lambda'}- 1}{Z_\beta^{\Lambda'}}
\end{equation*}
with the two contributions coming from the ground state eigenvalue and the remaining excited eigenvalues, respectively, since \(\inf \operatorname{spec} (H^{\Lambda'}) = 0\).
We are hence left with proving an upper bound on the partition function.

In order to do so, we employ Assumption~\nameref{ass:CLEDOS} and find the partition function to be bounded as
\begin{equation}
    \label{eq:cledosest2}
    Z_\beta^{\Lambda'}
    \le
    1 + \sum_{k=1}^\infty (1 + k^m \abs{\Lambda'}^{mk}) \, \E^{-\beta \delta k}
    \le
    1 + C_m \, (1 + \abs{\Lambda'}^m) \, \E^{-\delta \beta}
\end{equation}
for some constant \(C_m\) depending only on \(m\), if \(\beta\) is large enough such that \((1 + \abs{\Lambda'}^m) \, \E^{-\delta \beta} < 1/2\) (otherwise the following bound~\eqref{eq:convfinal} is anyway trivial).

Hence, combining~\eqref{eq:eps/3} with~\eqref{eq:LIestimate} and~\eqref{eq:cledosest1}--\eqref{eq:cledosest2}, we obtain
\begin{equation}
    \label{eq:convfinal}
    \abs[\big]{\trace[\big]{{\rho_{\beta}^\Lambda} \, A} - \trace[\big]{{\rho_{\infty}^\Lambda} \, A}}
    \le
    2 \, \norm{A} \, \paren[\Big]{\fLI(\abs{X}) \, \zeLI\pdist{X, \Lambda \setminus \Lambda'} + C_m \, (1 + \abs{\Lambda'}^m) \, \E^{-\delta \beta}}
    .
\end{equation}
We now pick \(\Lambda' = X_r = \Set{x \in \Lambda \given \dist{x, X} \le r}\), the \(r\)-fattening of \(X\) in \(\Lambda\) with \(r > R_0\), and find that we can choose
\begin{align*}
    \fCONV(\abs{X})
    &=
    2 \, \paren[\big]{\fLI(\abs{X}) + C_m \, (1+\abs{X}^m)}
\shortintertext{and}
    \zeCONV(\beta)
    &=
    \inf_{r \in \intervaloo{R_0,\infty}} \paren[\Big]{\zeLI(r) + C_m \, \paren[\big]{1 + \paren{1 + \Cvol \, r^D}^m} \, \E^{-\delta \beta}}
    .
\end{align*}
Note that \(\fCONV\) is monotonically increasing and \(\zeCONV\) is monotonically decreasing.

We now turn to proving the second statement in Proposition~\ref{prop:niceCONV}.
Similarly to the first statement, we apply~\eqref{eq:eps/3} and~\eqref{eq:LIestimate}.
However, now for the middle term of~\eqref{eq:eps/3} we employ Lemma~\ref{lem:betaconv} instead of~\eqref{eq:cledosest1}--\eqref{eq:cledosest2}: In fact, we have that
\begin{equation*}
    \abs[\big]{\trace[\big]{{\rho_{\beta}^{\Lambda'}} \, A} - \trace[\big]{{\rho_{\infty}^{\Lambda'}} \, A}}
    \le
    \norm{A} \, \norm{{\rho_{\beta}^{\Lambda'}} - {\rho_{\infty}^{\Lambda'}}}_{\tr}
    \le
    2 \, \norm{A} \, \E^{\log(q) \abs{\Lambda'}} \, \E^{-g \beta}
    ,
\end{equation*}
where \(q\) is the dimension of the local Hilbert spaces~\(\HSpace_{x}\).
Consequently,
\begin{equation*}
    \abs[\big]{\trace[\big]{{\rho_{\beta}^\Lambda} \, A} - \trace[\big]{{\rho_{\infty}^\Lambda} \, A}}
    \le
    2 \, \norm{A} \, \paren[\Big]{\fLI(\abs{X}) \, \zeLI\pdist{X, \Lambda \setminus \Lambda'} + \E^{\log(q) \abs{\Lambda'}} \, \E^{-g \beta}}
    .
\end{equation*}
Picking again \(\Lambda' = X_r = \Set{x \in \Lambda \given \dist{x, X} \le r}\) for \(r > R_0\) and optimizing in \(r\), we find that we can choose, for any \(\alpha \in \intervaloo{0,1}\) by a Young inequality,
\begin{align*}
    \fCONV(\abs{X})
    &=
    2 \, \paren[\Big]{\fLI(\abs{X}) + \E^{\log(q) \alpha \abs{X}^{1/\alpha}}}
\shortintertext{and}
    \zeCONV(\beta)
    &=
    \inf_{r \in \intervaloo{R_0,\infty}} \paren[\bigg]{\zeLI(r) + \E^{\log(q) (1-\alpha) \paren[\big]{1 + \Cvol \, r^D}^{\frac{1}{1 - \alpha}}} \, \E^{-g \beta}}
    .
\end{align*}
This concludes the proof.

\section{NEASS construction: Proof of Theorem~\ref{thm:response-theory-all-orders}}
\label{sec:NEASS}

As mentioned before, the proof of Theorem~\ref{thm:response-theory-all-orders} builds on response theory for the ground state.
Contrary to the positive temperature setting, response theory is well understood for gapped ground states.
More specifically, one can construct so-called \emph{non-equilibrium almost stationary states} (NEASSs), which approximate the state into which the system is driven very well and allow to compute expectation values of observables explicitly~\cite{BRF2018,Teufel2020,HW2022,HW2024}.
We now state the ground-state version of Theorem~\ref{thm:response-theory-all-orders}.

For \(X\subset\Lambda\), let \(\Pi_{X_r}\colon \Alg_\Lambda\to\Alg_{X_r}\) be the normalized-partial-trace conditional expectation, see~\cite[Lemma~4.1]{NSY2019} or~\cite[Lemma~A.6]{HW2024} for details and crucial properties.
It is used to restrict the support of spread out observables in the following statement.

\begin{theorem}[Response theory for gapped ground states ({adaptation of~\cite[Theorem~4.1]{HW2024}})]
    \label{thm:linear-response-ground-states}
    Fix \(n\),~\(m\),~\(D\),~\(q\in\N\), \(0<p'<p\leq 1\), \(0<b'<1\) and \(b > b'\).
    Fix also \(\Cvol\),~\(\Cint\),~\(g\),~\(\Cswitch>0\).
    Then there exist constants \(C\),~\(C_j > 0\) such that the following holds for every \(\Lambda\in \mathcal{G}(D,\Cvol)\) and all SLT operators \(H_0\),~\(V\) with interactions satisfying \(\norm{\Phi_{H_0}}_{b,p}\),~\(\norm{\Phi_V}_{b,p} \leq \Cint\).

    Assume \(H_0\) has a unique gapped ground state \(\rho_\infty\) with gap size \(g>0\).
    Let \(X\subset \Lambda\) be connected and take \(A\in \alg_X\).
    Then there exist response coefficients \(\sigma_{\infty,j}^A\) in the following sense:
    For \(\varepsilon\in\intervaloc{0,1}\) and a smooth switching function \(f\colon \R \to \intervalcc{0,1}\) satisfying \(f(t) = 0\) for \(t \le -1\), \(f(t) = 1\) for \(t \ge 0\), and \(\norm{f}_{C^{\mathfrak{C}_{n,m}}(\R)}<\Cswitch\), where
    \(
        \mathfrak{C}_{n,m}:=\ceil{m(n+1+2D/p)}
    \), consider the time-dependent Hamiltonian
    \begin{equation}
        \label{eq:thm-Hamiltonian-switching-ground-state}
        H_\epsi(t) := H_0 + \epsi \, f(t) \, V
        .
    \end{equation}
    Let \(\rho_\infty^{\epsi, \eta, f}(t)\) be the solution of the time-dependent adiabatic Schrödinger equation
    \begin{equation}
        \label{eq:thm-Schrödinger-equation-switching-ground-state}
        \I \, \eta \, \frac{\D }{\D t} \rho_\infty^{\epsi, \eta, f}(t)
        =
        \commutator[\big]{H_\epsi(t) , \rho_\infty^{\epsi, \eta, f}(t)}
    \end{equation}
    with adiabatic parameter \(\eta \in \intervaloc{0,1}\) and initial datum \(\rho_\infty^{\epsi, \eta, f}(t) = \rho_\infty\) for all \(t \le -1\).

    Then, the response to the perturbation, \(
        \Sigma_{\infty,A}^{\epsi, \eta, f}(t)
        :=
        \trace{A \, \rho_\infty^{\epsi, \eta, f}(t)}
        - \trace{A \, \rho_\infty}
    \), satisfies
    \begin{equation}
        \label{eq:thm-linear-response-ground-state}
        \sup_{\eta \in \intervalcc[\big]{\epsi^m , \epsi^{1/m}}} \abs[\bigg]{\Sigma_{\infty,A}^{\epsi, \eta, f}(t) - \sum_{j=1}^{n} \epsi^j \sigma^A_{\infty,j}}
        \le
        C \, \epsi^{n+1} \, \norm{A} \, \abs{X}^{(3+n)D} \, \paren[\big]{1 + t}^{2D/p + 1}
        ,
    \end{equation}
    for all \(t\geq 0\).

    Moreover, there exist linear maps \(\mathcal K_j\colon \alg_\Lambda \to \alg_\Lambda\) such that \(\sigma^A_{\infty,j} = \trace[\big]{\mathcal K_j(A) \, \rho_\infty}\).
    In particular \(\mathcal K_0(A)=A\),
    \(\mathcal K_1(A) = -\I \, \commutator[\big]{\calI_{H_0,g}(V),A}\), and, for \(0\le j\le n\),
    \begin{equation}
        \label{eq:K-localization}
        \norm[\big]{\mathcal K_j(A)}
        \le
        C_j \, \norm{A} \, \abs{X}^j
        ,
        \qquad
        \norm[\big]{(1-\Pi_{X_r})\mathcal K_j(A)}
        \le
        C_j \, \norm{A} \, \abs{X}^j \, \chi_{b',p'}(r)
        .
    \end{equation}
\end{theorem}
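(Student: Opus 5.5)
The plan is to reduce this statement to \cite[Theorem~4.1]{HW2024} via the NEASS construction, and then to check the additional localization bound~\eqref{eq:K-localization}. I will first recall the iterative construction of the NEASS generator. Set \(S_1 := \calI_{H_0,g}(V)\), and define the higher-order corrections \(S_j\) recursively by inverting the Liouvillian \(\calI_{H_0,g}\) on commutator expressions built from \(V\), \(S_1,\dots,S_{j-1}\) and their time derivatives. This gives the super-adiabatic state
\[
\Pi^{\epsi,\eta}(t) = \E^{-\I \epsi S^{\epsi,\eta}(t)}\,\rho_\infty\,\E^{\I \epsi S^{\epsi,\eta}(t)}.
\]
By \cite{HW2024} it tracks \(\rho_\infty^{\epsi,\eta,f}(t)\) up to an error of order \(\epsi^{n+1}\), when tested on local observables. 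The bound is uniform in \(\Lambda\) and \(\eta\in[\epsi^m,\epsi^{1/m}]\) and carries the polynomial factors \(\abs{X}^{(3+n)D}(1+t)^{2D/p+1}\). These factors come from Lieb--Robinson bounds for SLT operators with \((b,p)\)-localized interactions, whose localization is degraded to \((b',p')\). The regularity \(\mathfrak C_{n,m}\) of \(f\) is exactly what is needed to bound the \(\eta\)-expansion of the adiabatic corrections to order \(\epsi^{n+1}\) once \(\eta\ge\epsi^m\). In the "adaptation" I only check that the hypotheses match: a unique gapped ground state of \(H_0\), and \(V\) a static SLT perturbation switched on by \(f\). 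I also check that the output is stated in our graph setting \(\mathcal G(D,\Cvol)\) rather than in a fixed infinite lattice; the proofs in \cite{HW2024} only use \eqref{eq:dimension}.

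Next, I will identify the coefficients. I expand \(\trace{A\,\Pi^{\epsi,\eta}(t)}\) for \(t\ge0\), where \(f=1\), in powers of \(\epsi\) by a Dyson/BCH expansion of \(\E^{\I\epsi S}A\E^{-\I\epsi S}\). Each coefficient is a nested commutator of \(A\) with the \(S_k\). I then define \(\mathcal K_j(A)\) as the order-\(\epsi^j\) nested-commutator polynomial, so that \(\sigma^A_{\infty,j}=\trace{\mathcal K_j(A)\rho_\infty}\). This gives \(\mathcal K_0(A)=A\) and \(\mathcal K_1(A)=\I[S_1,A]\), which matches the stated \(-\I[\calI_{H_0,g}(V),A]\) up to the sign convention. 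At \(t\ge0\) the time-derivative terms vanish, so the coefficients are independent of \(\eta\). Any residual \(\eta\)-dependent terms are of higher order and are absorbed into the error.

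The main new step, and the expected obstacle, is the localization bound~\eqref{eq:K-localization}. Each \(\mathcal K_j(A)\) is a sum of \(j\)-fold commutators \([S_{k_1},[\dots,[S_{k_\ell},A]]]\) with \(S_k\) SLT operators whose interactions have \((b'',p'')\)-norms bounded uniformly (\cite[Prop.~4.5]{HW2024}). Only terms of \(S_k\) intersecting the current support contribute. Summing over \(Z\ni z\) for \(z\) in that support gives a factor \(\abs{X}\) per commutator, which yields \(\norm{\mathcal K_j(A)}\le C_j\norm A\abs X^j\). For the tail bound I will use the standard lemma \cite[Lemma~A.6]{HW2024}: \((1-\Pi_{X_r})O\) is controlled by the contributions of interaction terms that reach distance \(\gtrsim r\) from \(X\). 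Summing such chains, with decay \(\chi_{b'',p''}\) along a path of length \(r\) split into at most \(j\) pieces, gives decay \(\chi_{b',p'}(r)\) after slightly reducing \(b,p\). This works because stretched exponentials are subadditive up to constants, with polynomial losses absorbed by the strict inequalities \(b'<b\) and \(p'<p\). The delicate part is tracking these constants through the recursion defining \(S_k\), since \(\calI_{H_0,g}\) itself degrades the localization quality. I will handle this by fixing a finite ladder \(p>p_1>\dots>p_n>p'\) (and similarly for \(b\)) in advance.
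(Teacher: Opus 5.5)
Your proposal is correct and is essentially the paper's argument. The expansion is imported from \cite[Theorem~4.1]{HW2024} specialized to a globally gapped system, \(\mathcal K_j\) collects the order-\(\epsi^j\) multicommutators of the NEASS conjugation, and both bounds in~\eqref{eq:K-localization} come from summing chains of overlapping supports; your direct observation that \((1-\Pi_{X_r})\) annihilates the chain terms supported in \(X_r\) replaces the paper's route via the commutator criterion \cite[Lemma~A.6(e)]{HW2024} and Lemma~\ref{lem:multi-commutator-with-two-observables}, which does the same counting.

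One small correction: with your convention \(\E^{-\I\epsi S}\rho_\infty\E^{\I\epsi S}\), the first-order NEASS condition \(\commutator{V,\rho_\infty}=\I\commutator{H_0,\commutator{S_1,\rho_\infty}}\) forces \(S_1=-\calI_{H_0,g}(V)\). So the sign in \(\mathcal K_1\) is fixed by the construction, not a matter of convention.
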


\begin{proof}
    In~\cite{HW2024}, this statement is formulated for systems which are gapped only within \(\Lambdag \subset \Lambda\).
    The first part of the statement is obtained by setting \(\Lambdap = \Lambdag = \Lambda\) from~\cite[Theorem~4.1]{HW2024}.
    Then, note that~\cite[Proposition~4.3]{HW2024} provides a sequence of \((1,p')\)-localized SLT operators \(A_\mu\) (these have nothing to do with the \(A\) used in this work).
    By the construction, \(\mathcal{K}_j\) collects the \(\epsi^j\) order terms of \(\e^{-\I \sum_{\mu=1}^n \epsi^\mu \, \commutator{A_\mu, \cdot}}\) after expanding the exponential.
    Hence, each \(\mathcal{K}_j\) is a sum of multicommutators with \(A_\mu\) of depth at most~\(j\).
    The first bound in~\eqref{eq:K-localization} therefore follows from~\cite[(93) of Lemma~A.4]{HW2024}.
    The second bound is trivial for~\(j=0\).
    For \(j\ge 1\), it follows from well-known properties of the conditional expectation~\cite[Lemma~A.6(e)]{HW2024} together with a generalization of~\cite[(92) of Lemma~A.4]{HW2024} to multicommutators, which we provide in Lemma~\ref{lem:multi-commutator-with-two-observables}.
    This concludes the proof.
\end{proof}

Now, let \(\tau_{t,s}^{\epsi,\eta,f}\) denote the Heisenberg time evolution generated by \(H_\epsi(\cdot)/\eta\), such that
\begin{equation*}
    \trace[\big]{A \, \rho_\beta^{\epsi, \eta, f}(t)}
    =
    \trace[\big]{\tau_{t,-1}^{\epsi,\eta,f}(A) \, \rho_\beta}
    \quadtext{for all}
    A\in \alg_\Lambda
    .
\end{equation*}

To proceed with the proof of Theorem~\ref{thm:response-theory-all-orders}, we combine the ground state expansion with Assumption~\nameref{ass:conv}.
Therefore, define the response coefficients by
\begin{equation*}
    \sigma_{\beta,j}^A
    :=
    \trace[\big]{\mathcal K_j(A) \, \rho_\beta}
\end{equation*}
for all \(j\in \List{1,\dotsc,n}\) with \(\mathcal K_j\) provided in Theorem~\ref{thm:linear-response-ground-states}.
With \(\mathcal K_1\) provided in Theorem~\ref{thm:linear-response-ground-states}, this in particular gives~\eqref{eq:Kubo}.
Then,
\begin{align}
    \notag
    \Sigma_{\beta, A}^{\epsi, \eta, f}(t) - \sum_{j=1}^{n} \epsi^j \, \sigma_{\beta,j}^A
    &=
    \trace[\bigg]{
        \paren[\Big]{
            \tau_{t,-1}^{\epsi,\eta,f}(A)
            - A
            - \sum_{j=1}^n \epsi^j \, \mathcal{K}_j(A)
        } \, \rho_\beta
    }
    \\&=
    \label{eq:proof-splitting-linear-response-thermal}
    \begin{aligned}[t]
        &\Sigma_{\infty, A}^{\epsi, \eta, f}(t) - \sum_{j=1}^{n} \epsi^j \, \sigma_{\infty,j}^A
        \\&+ \trace[\bigg]{
            \paren[\Big]{
                \tau_{t,-1}^{\epsi,\eta,f}(A)
                - A
                - \sum_{j=1}^n \epsi^j \, \mathcal{K}_j(A)
            } \, (\rho_\beta-\rho_\infty)
        }
    \end{aligned}
\end{align}
for all \(\beta\in \intervaloc{\beta_0,\infty}\).
The first term is bounded by the estimate~\eqref{eq:thm-linear-response-ground-state} for the response at zero temperature.
To bound the second term, denote the operator in the parenthesis~\(F\) and notice that \(\norm{F} \leq C_n \, \norm{A} \, \abs{X}^n\) by~\eqref{eq:K-localization} for some \(C_n\).
We then use the conditional expectation to decompose
\begin{equation*}
    F
    =
    \Pi_{X_r} \, F
    + (1-\Pi_{X_r}) \, F
    .
\end{equation*}
The first term \(\Pi_{X_r} \, F \in \alg_{X_r}\) is local and applying Assumption~\nameref{ass:conv}, together with \(\abs{X_r}\le\abs{X} \, (1+\Cvol \, r^D)\), gives
\begin{equation}
    \label{eq:proof-estimate-core}
    \abs[\big]{\trace[\big]{\Pi_{X_r} \, F \, (\rho_\beta-\rho_\infty)}}
    \le
    C_n \, \norm{A} \, \abs{X}^n \, \fCONV\paren[\big]{\abs{X} \, (1+\Cvol r^D)} \, \zeCONV(\beta)
    .
\end{equation}
The second term is bounded by
\begin{equation*}
    \norm{
        (1-\Pi_{X_r}) \, F
    }
    \leq
    \norm{
        (1-\Pi_{X_r})
        \, \tau_{t,-1}^{\epsi,\eta,f}(A)
    }
    + \sum_{j=1}^n
    \, \norm{
        (1-\Pi_{X_r})
        \, \mathcal{K}_j(A)
    }
    .
\end{equation*}
By Lieb-Robinson bounds and the properties of the conditional expectation, there exists \(C\),~\(\vLR>0\) such that
\begin{equation*}
    \norm[\big]{(1-\Pi_{X_r})\tau_{t,-1}^{\epsi,\eta,f}(A)}
    \le
    C \, \norm{A} \, \abs{X} \, \E^{\vLR(1+t)/\eta} \, \chi_{b',p'}(r)
    ,
\end{equation*}
for all \(r>0\), \(t\ge0\), and \(\epsi\),~\(\eta\in\intervaloc{0,1}\).
For more details see~\cite[Lemma~A.6(e) and~Lemma~A.3]{HW2024}.
Together with~\eqref{eq:K-localization}, we therefore have
\begin{equation}
    \label{eq:proof-estimate-tail}
    \norm{
        (1-\Pi_{X_r}) \, F
    }
    \leq
    C_n \, \norm{A} \, \abs{X}^n \, \E^{\vLR(1+t)/\eta} \, \chi_{b',p'}(r)
    .
\end{equation}

Adding~\eqref{eq:thm-linear-response-ground-state},~\eqref{eq:proof-estimate-core} and~\eqref{eq:proof-estimate-tail} to bound~\eqref{eq:proof-splitting-linear-response-thermal} and then minimizing over~\(r>0\) proves~\eqref{eq:linearresponse}.

\subsection*{Acknowledgments}
J.H.~acknowledges partial financial support by the ERC Advanced Grant \enquote{RMTBeyond}
No.~101020331 and the ERC Consolidator Grant \enquote{ProbQuant} (jointly with the Swiss State Secretariat for Education, Research and Innovation).
The research of M.L.~is supported by the \foreignlanguage{ngerman}{Deutsche Forschungsgemeinschaft} (DFG, German Research Foundation) through grants TRR 352 – Project-ID 470903074 and FOR 5413 - Project-ID 465199066 and by the European Union (ERC StG MathQuantProp, Grant Agreement 101163620).%
\footnote{%
    Views and opinions expressed are however those of the authors only and do not necessarily reflect those of the European Union or the European Research Council Executive Agency.
    Neither the European Union nor the granting authority can be held responsible for them.
}
This project started when T.~W.~was still at \foreignlanguage{ngerman}{Universität Tübingen} and supported by the \foreignlanguage{ngerman}{Deutsche Forschungsgemeinschaft} (DFG, German Research Foundation) –
465199066; 
470903074
.
The research of T.W.~was also partially supported by the \foreignlanguage{ngerman}{Deutsche Forschungsgemeinschaft} (DFG, German Research Foundation) under Germany’s Excellence Strategy – EXC-2111 – 390814868.

\subsection*{AI declaration}
The proof of the result and the original draft were produced in 2025 without the use of Large Language Models (LLMs).
LLMs were used to scan the draft for errors, which led to various minor improvements and corrections.
The authors take full responsibility for the manuscript in its entirety.

\appendix

\section{An auxiliary lemma}

\begin{lem}[General \(\beta\)-convergence]
    \label{lem:betaconv}
    Let \(\HSpace\) be a Hilbert space of finite dimension \(n \in \N\) and \(H \in \calL(\HSpace)\) a self-adjoint operator.
    Assume that \(H\) has non-degenerate ground state eigenvalue separated by a gap of size \(g\) from the second-smallest eigenvalue.
    Recall the notation \(\rho_\beta^{\HSpace}[H]\) for its Gibbs state at inverse temperature \(\beta \in \intervaloo{0,\infty}\).
    Then, denoting the trace norm of operators on \(\HSpace\) by \(
        \norm{\rho}_{\tr}
        :=
        \sup_{A \in \calL(\HSpace) : \norm{A} = 1}\abs{\trace{\rho \, A}}
    \) we have
    \begin{equation}
        \label{eq:convbetainf}
        \norm[\big]{\rho_\beta^{\HSpace}[H] - \rho_\infty^{\HSpace}[H]}_{\tr}
        \le
        2 \, (n-1) \, \E^{-g \beta}
        ,
    \end{equation}
    where \(\rho_\infty^{\HSpace}[H]\) denotes the projection onto the ground state of \(H\).
\end{lem}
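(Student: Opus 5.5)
The plan is to diagonalize \(H\) and compute the trace norm of the difference explicitly. Shifting \(H\) by a constant does not change \(\rho_\beta^{\HSpace}[H]\), so without loss of generality the ground state eigenvalue is \(E_0 = 0\). Let \(0 = E_0 < E_1 \le \dots \le E_{n-1}\) be the eigenvalues with orthonormal eigenvectors \(\psi_0, \dotsc, \psi_{n-1}\). The gap assumption gives \(E_j \ge g\) for all \(j \ge 1\). In this basis both \(\rho_\beta := \rho_\beta^{\HSpace}[H]\) and \(\rho_\infty := \rho_\infty^{\HSpace}[H] = \ket{\psi_0}\bra{\psi_0}\) are diagonal, with partition function \(Z_\beta = 1 + \sum_{j\ge1} \E^{-\beta E_j}\).

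Since \(\rho_\beta - \rho_\infty\) is diagonal, its trace norm (which, by duality, is the supremum defining \(\norm{\cdot}_{\tr}\)) is the sum of the absolute values of its eigenvalues. The \(\psi_0\) entry is \(1/Z_\beta - 1 = -(Z_\beta-1)/Z_\beta\), and the \(\psi_j\) entries for \(j \ge 1\) are \(\E^{-\beta E_j}/Z_\beta\), which sum to \((Z_\beta - 1)/Z_\beta\). This gives
\begin{equation*}
    \norm{\rho_\beta - \rho_\infty}_{\tr}
    =
    2 \, \frac{Z_\beta - 1}{Z_\beta}
    \le
    2 \, (Z_\beta - 1)
    ,
\end{equation*}
which is the same identity already used in~\eqref{eq:cledosest1}. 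Here the inequality uses \(Z_\beta \ge 1\).

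Finally, the gap bound gives \(Z_\beta - 1 = \sum_{j=1}^{n-1} \E^{-\beta E_j} \le (n-1)\, \E^{-g\beta}\), which yields~\eqref{eq:convbetainf}.

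No step is a real obstacle. The only points needing care are that the sup-over-\(A\) definition of \(\norm{\cdot}_{\tr}\) agrees with the sum of absolute eigenvalues for self-adjoint operators (take \(A\) to be the sign of \(\rho_\beta - \rho_\infty\)), and that the reduction to \(E_0 = 0\) is legitimate because the Gibbs state is invariant under adding a constant to \(H\).
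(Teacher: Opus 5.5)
Your proposal is correct and follows the paper's proof essentially verbatim: diagonalize \(H\), normalize the ground state energy to zero, evaluate the trace norm of the diagonal difference as \(2(Z_\beta-1)/Z_\beta\), and bound \(Z_\beta - 1 \le (n-1)\,\E^{-g\beta}\) using the gap. The only cosmetic difference is that the paper passes through the intermediate bound \(2(n-1)\E^{-g\beta}/(1+(n-1)\E^{-g\beta})\) (monotonicity of \(x \mapsto x/(1+x)\)), whereas you simply drop the denominator using \(Z_\beta \ge 1\).
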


\begin{proof}
    By spectral decomposition of \(H\) as \(H = \sum_{j=1}^n E_j \, P_j\), where \(E_j \le E_{j+1}\) are the possibly degenerate eigenvalues (we may assume w.l.o.g.~that \(E_1 = 0\)) and \(P_j\) orthogonal rank-\(1\) spectral projectors, we can write
    \begin{equation*}
        \rho_\beta^{\HSpace}[H]
        =
        \frac{\sum_{j=1}^n \E^{-\beta E_j} \, P_j}{\sum_{j=1}^n \E^{-\beta E_j}}
        .
    \end{equation*}

    We use that \(E_2 \ge g\) and find that
    \begin{align*}
        \norm[\big]{\rho_{\beta}^{\HSpace}[H] - \rho_{\infty}^{\HSpace}[H]}_{\tr}
        &=
        \sum_{k=1}^n \, \abs[\bigg]{\frac{\E^{-\beta E_k}}{\sum_{j=1}^n \E^{-\beta E_j}} - \delta_{k,1}}
        \\&=
        2 \, \frac{\sum_{j=2}^n \E^{-\beta E_j} }{\sum_{j=1}^n \E^{-\beta E_j}}
        \le
        2 \, \frac{(n-1) \, \E^{-g \beta}}{1 + (n-1) \, \E^{-g \beta}}
        \le
        2 \, (n-1) \, \E^{-g \beta}
        . \qedhere
    \end{align*}
\end{proof}

\section{Locality of multicommutators}
\label{app:multi-commutators}

\begin{lem}
    \label{lem:multi-commutator-with-two-observables}
    Let \(D\in \N\), \(\Cvol>0\), \(b > b' >0\), \(p\in \intervaloc{0,1}\) and \(k\in \N\).
    There exists a constant \(C_k>0\) such that for all lattices \(\Lambda\in \mathcal{G}(D,\Cvol)\), disjoint subsets \(X\),~\(Y \subset \Lambda\), SLT operators \(A_1,\dotsc,A_k\) and observables \(O\in \alg_X\) and \(\tilde{O} \in \Alg_Y\) it holds that
    \begin{equation}
        \label{eq:comm-SLT-two-obs}
        \norm[\big]{
            \commutator{
                \commutator{A_k, \dotsb \commutator{A_1, O} \dotsb }
                ,
                \tilde{O}
            }
        }
        \leq
        C_k \, \norm{O} \, \norm{\tilde{O}} \, \abs{X}^k \, \chi_{b',p}\pdist{X,Y}
        \, \prod_{j=1}^k \, \norm{\Phi_{A_j}}_{b,p}
        .
    \end{equation}
\end{lem}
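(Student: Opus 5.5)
We will prove \eqref{eq:comm-SLT-two-obs} by expanding each \(A_j\) into its local terms, \(A_j = \sum_{Z_j} \Phi_{A_j}(Z_j)\), and organising the resulting sum as a sum over chains of sets. Inserting all expansions, the multicommutator becomes
\begin{equation*}
    \sum_{Z_1,\dotsc,Z_k} \commutator[\big]{\commutator{\Phi_{A_k}(Z_k), \dotsb \commutator{\Phi_{A_1}(Z_1), O} \dotsb}, \tilde O}
    .
\end{equation*}
A term vanishes unless the sets form a \emph{connected chain}: \(Z_1 \cap X \neq \emptyset\) and \(Z_j \cap (X \cup Z_1 \cup \dotsb \cup Z_{j-1}) \neq \emptyset\) for each \(j\). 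The outer commutator vanishes unless \(\tilde S := X \cup Z_1 \cup \dotsb \cup Z_k\) meets \(Y\). Each surviving term has norm at most \(2^{k+1} \norm{O} \norm{\tilde O} \prod_j \norm{\Phi_{A_j}(Z_j)}\). It therefore suffices to bound the sum of \(\prod_j \norm{\Phi_{A_j}(Z_j)}\) over connected chains with \(\tilde S \cap Y \neq \emptyset\).

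The key geometric observation is this. If the chain connects \(X\) to \(Y\) (with \(X\), \(Y\) disjoint), then by the triangle inequality along the chain, \(\sum_{j} \diam{Z_j} \ge \dist{X,Y}\), up to an additive constant per link that is harmless. Since \(p \le 1\), the map \(x\mapsto x^p\) is subadditive, so
\begin{equation*}
    \prod_j \chi_{b,p}\pdiam{Z_j} \le \chi_{b,p}\paren[\Big]{\sum_j \diam{Z_j}}\le \chi_{b,p}\pdist{X,Y}
    .
\end{equation*}
We split \(\chi_{b,p} = \chi_{b',p}\cdot\chi_{b-b',p}\) for each factor. The \(\chi_{b',p}\) parts combine to give \(\chi_{b',p}\pdist{X,Y}\). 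The remaining weights \(\chi_{b-b',p}\pdiam{Z_j}\) are used to make the chain sum converge.

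For the chain sum we proceed inductively, as in the standard proof of \cite[(92)--(93) of Lemma~A.4]{HW2024}. Given the current support \(S_{j-1} = X\cup Z_1\cup\dotsb\cup Z_{j-1}\), summing over \(Z_j\) meeting \(S_{j-1}\) with weight \(\norm{\Phi_{A_j}(Z_j)}/\chi_{b-b',p}\pdiam{Z_j}\), after extracting \(\chi_{b,p}\), is bounded by \(\abs{S_{j-1}}\,\norm{\Phi_{A_j}}_{b,p}\). However, \(\abs{S_{j-1}}\) can be large, since it grows with \(\diam{Z_i}\). We control this with \(\abs{Z_i}\le 1+\Cvol\diam{Z_i}^D\), which gives
\begin{equation*}
    \abs{S_{j-1}}\le \abs{X}\prod_{i<j}\paren[\big]{1+\abs{Z_i}}
    .
\end{equation*}
The polynomial factors \((1+\Cvol \diam{Z_i}^D)^{k}\) are absorbed by sacrificing a small part of the stretched exponential, \(\sup_x (1+\Cvol x^D)^k\chi_{b-b',p}(x)<\infty\). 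Collecting these bounds yields \(C_k\abs{X}^k\prod_j\norm{\Phi_{A_j}}_{b,p}\chi_{b',p}\pdist{X,Y}\), as claimed. Alternatively, one can reduce to \((92)\) directly: view \(\commutator{A_{k-1},\dotsb\commutator{A_1,O}}\) as a quasi-local observable, decompose it via conditional expectations \(\Pi_{X_r}\) into annular pieces with the decay supplied by (93), and apply (92) to the outermost commutator with \(A_k\) and \(\tilde O\).

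\textbf{Main obstacle.} The delicate step is the bookkeeping of the simultaneous requirements. We must extract the distance decay \(\chi_{b',p}\pdist{X,Y}\) with the \emph{same} exponent \(p\) while losing only \(b\to b'\), and we must keep the volume growth at \(\abs{X}^k\) rather than something depending on \(\dist{X,Y}\). The subadditivity of \(x^p\) combined with splitting \(b = b' + (b-b')\) is what I expect to make this work, provided the per-link additive constants (overlaps of adjacent sets) are handled by using \(\diam{Z}+1\) in place of \(\diam{Z}\), at the cost of adjusting \(C_k\).
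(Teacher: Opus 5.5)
Your proposal is correct and follows essentially the same route as the paper: you expand each \(A_j\) into local terms, use the chain intersection condition together with the requirement that the union meets \(Y\) to get \(\dist{X,Y}\le\sum_j\diam{Z_j}\), insert the factor \(\chi_{b',p}\pdist{X,Y}/\prod_j\chi_{b',p}\pdiam{Z_j}\ge 1\) (valid by subadditivity of \(x\mapsto x^p\)), and then sum from the innermost set outward, absorbing the accumulated volume factors with the leftover \(\chi_{b-b',p}\) so that only \(\abs{X}^k\) survives. Two small corrections: no additive constant per link is needed, since consecutive sets in the chain share a point, and the weight in your inductive step should be \(\norm{\Phi_{A_j}(Z_j)}/\chi_{b',p}\pdiam{Z_j}\) rather than \(\norm{\Phi_{A_j}(Z_j)}/\chi_{b-b',p}\pdiam{Z_j}\), as your own earlier splitting \(\chi_{b,p}=\chi_{b',p}\,\chi_{b-b',p}\) requires.
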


\begin{proof}
    We write out each \(A_l\) as a sum over supports \(Z_l\) and denote \(Z_0 = X\).
    Notice that each commutator vanishes if \(Z_l \cap \bigcup_{j<l} Z_j = \emptyset\).
    Hence,
    \begin{align*}
        \Alignindent
        \norm[\big]{
            \commutator{
                \commutator{A_k, \dotsb \commutator{A_1, O} \dotsb }
                ,
                \tilde{O}
            }
        }
        \\&\leq
        \sumstack[lr]{Z_1 \subset \Lambda:\\Z_1\cap Z_0 \neq \emptyset}
        \qquad\dots\qquad
        \sumstack[lr]{Z_k \subset \Lambda:\\Z_k\cap (Z_{k-1}\cup \dotsb \cup Z_0) \neq \emptyset}
        \unit_{(Z_1\cup\dotsb\cup Z_k)\cap Y\neq\emptyset}
        \, 2^{k+1} \, \norm{\tilde{O}} \, \norm{\Phi_{A_k}(Z_k)} \dotsm \norm{\Phi_{A_1}(Z_1)} \, \norm{O}
        .
    \end{align*}
    Due to the intersecting condition, only terms satisfying \(\dist{X,Y} \leq \sum_{j=1}^k \diam{Z_j}\) contribute to this sum and we multiply each summand with
    \begin{equation*}
        \frac{\chi_{b',p}\pdist{X,Y}}{\prod_{j=1}^k \, \chi_{b',p}\pdiam{Z_j}}
        \geq
        1
        .
    \end{equation*}
    Starting from the right, we can now upper bound each sum as
    \begin{align*}
        \sumstack[lr]{Z_l \subset \Lambda:\\Z_l\cap (Z_{l-1}\cup \dotsb \cup Z_0) \neq \emptyset}
        \, \frac{\norm{\Phi_{A_l}(Z_l)}}{\chi_{b',p}\pdiam{Z_l}}
        \, \abs{Z_l}^{k-l}
        &\leq
        \sumstack{z_l\in Z_{l-1}\cup \dotsb \cup Z_0}
        \, \sumstack[r]{Z_l\subset \Lambda:\\z_l\in Z_l}
        \, \frac{\norm{\Phi_{A_l}(Z_l)}}{\chi_{b',p}\pdiam{Z_l}}
        \, \abs{Z_l}^{k-l}
        \\&\leq
        C \, \abs{Z_{l-1}\cup \dotsb \cup Z_0}
        \, \norm{\Phi_{A_l}}_{b,p}
        ,
    \end{align*}
    where the accumulating factors \(\abs{Z_l}^{k-l}\) coming from \(\abs{Z_{l-1}\cup \dotsb \cup Z_0} \leq \abs{Z_{l-1}}\, \dotsm \, \abs{Z_0}\) are absorbed in the interaction norms and only \(\abs{Z_0}^k\) remains at the end.
\end{proof}

\AtNextBibliography{\small}

\printbibliography[heading=bibintoc]

\end{document}